\documentclass[aps,pra,reprint,superscriptaddress,nofootinbib,longbibliography,floatfix]{revtex4-2}
\usepackage{amsmath,amssymb,amsthm,bm,graphicx,microtype,needspace,xcolor}
\usepackage[colorlinks=true,linkcolor=blue,citecolor=blue,urlcolor=blue]{hyperref}
\makeatletter
\AtBeginDocument{\immediate\write\@auxout{\string\citation{apsrev42Control}}}
\makeatother
\newcommand{\ket}[1]{\left|#1\right\rangle}
\newcommand{\bra}[1]{\left\langle#1\right|}
\newcommand{\Tr}{\operatorname{Tr}}
\newtheorem{theorem}{Theorem}

\newtheorem{corollary}[theorem]{Corollary}

\begin{document}
\title{Ergotropy of nonequilibrium steady states beyond the full secular approximation}

\author{Xing-Xing Hu}
\affiliation{School of Physics, Northwest University, Xi'an 710127, China}

\author{Hui-Yu Yang}
\affiliation{School of Physics, Northwest University, Xi'an 710127, China}

\author{Kun Zhang}
\email{kunzhang@nwu.edu.cn}
\affiliation{School of Physics, Northwest University, Xi'an 710127, China}
\affiliation{Shaanxi Key Laboratory for Theoretical Physics Frontiers, Xi'an 710127, China}
\affiliation{Peng Huanwu Center for Fundamental Theory, Xi'an 710127, China}
\affiliation{Fundamental Discipline Research Center for Quantum Science and Technology of Shaanxi Province, Xi'an 710127, China}

\author{Xiao-Hui Wang}
\email{xhwang@nwu.edu.cn}
\affiliation{School of Physics, Northwest University, Xi'an 710127, China}
\affiliation{Shaanxi Key Laboratory for Theoretical Physics Frontiers, Xi'an 710127, China}
\affiliation{Peng Huanwu Center for Fundamental Theory, Xi'an 710127, China}
\affiliation{Fundamental Discipline Research Center for Quantum Science and Technology of Shaanxi Province, Xi'an 710127, China}

\date{\today}
\begin{abstract}
Understanding the generation and activation of ergotropy, the maximum work extractable by cyclic unitary operations, is central to quantum battery research. Positive-temperature thermal states are passive, whereas nonequilibrium environments can sustain nonpassive steady states with nonzero ergotropy. However, the full secular approximation commonly used in Lindblad descriptions can suppress steady-state coherence and obscure its contribution to ergotropy. Using a Bloch--Redfield master equation beyond the full secular approximation, we investigate the ergotropy of the nonequilibrium steady state (NESS) of two exchange-coupled qubits connected to independent thermal reservoirs. The NESS ergotropy of two qubits with the same frequency is entirely coherent and grows quadratically with temperature bias near equilibrium. In contrast, detuned qubits can exhibit incoherent ergotropy. We derive an exact analytical criterion for its activation and show that activation requires the hotter reservoir to be coupled to the qubit with the higher bare frequency. Comparing NESSs obtained with and without full secularization reveals quantitative and qualitative differences in ergotropy that reflect changes in both its coherent and incoherent contributions. The results clarify the mechanisms of NESS ergotropy activation and demonstrate the importance of retaining steady-state coherence beyond the full secular approximation when characterizing stationary work resources.
\end{abstract}
\maketitle

\section{Introduction}
\label{sec:introduction}
Quantum batteries provide a setting for exploiting quantum coherence, correlations, and collective control in energy storage and work extraction~\cite{Campaioli2024}. Their performance depends on how much of the stored energy can be delivered as useful work. For a fixed Hamiltonian, a passive state yields no work under cyclic unitary operations~\cite{PuszWoronowicz1978,Lenard1978}. Ergotropy quantifies the maximum work extractable through such operations and therefore distinguishes useful stored energy from energy trapped in a passive state~\cite{Allahverdyan2004}. It provides an operational benchmark for comparing battery states and identifying the resources responsible for work extraction~\cite{Skrzypczyk2014,Korzekwa2016}.

The coherent--incoherent decomposition of ergotropy separates the work available from energy populations from the additional contribution lost upon dephasing in the energy eigenbasis~\cite{Francica2020,ChinPhysLett.43.060602}. Studies have clarified the roles of coherence, correlations, and control constraints in work extraction~\cite{Shi2022,Lobejko2022,Alicki2013,Hovhannisyan2013,PerarnauLlobet2015,Andolina2019,Aberg2014,Lostaglio2015NC,Lostaglio2015PRX,PhysRevB.109.235432}. The coherent and incoherent contributions have also been explored experimentally and under environmental noise~\cite{Niu2024,Gao2026,HadipourHaseli2024}. Related work links coherence to irreversible work, entropy production, and heat-engine power~\cite{Francica2019,Santos2019,Scully2011,Brandner2017}. 

Maintaining ergotropy in an open quantum system requires sustaining a nonpassive state against relaxation. Thermalization to a positive-temperature Gibbs state eliminates ergotropy. A natural route to generating or sustaining ergotropy is through external driving or engineered system--environment interactions~\cite{Farina2019,Barra2019,Guarnieri2020,Cakmak2020,PhysRevA.103.033715,Li2022,MayoRoncaglia2022,Huang2022,DowningUkhtary2023,Shastri2025,WangYuWang2025,Xu2026TightBinding}. Another approach is to impose a temperature bias by coupling different subsystems or transitions to reservoirs at different temperatures. The resulting nonequilibrium steady state (NESS) can sustain positive ergotropy~\cite{Xu2019,Tacchino2020,Santos2021,Leitch2024,Mojaveri2024,SarmahGoswami2024,Boeyens2025,HadipourHaseli2024SteadyWork,HadipourHaseli2025BathMod,Yang2026}. 

Although Lindblad master equations are widely used to describe open quantum systems~\cite{Lindblad1976,Gorini1976,Breuer2002}, the full secular approximation commonly employed in their weak-coupling derivation can yield inaccurate NESS predictions when distinct transition frequencies are not well separated~\cite{Davies1974,Eastham2016,Cattaneo2019,Trushechkin2021,Potts2021,Tupkary2022}. In particular, full secularization can eliminate energy-basis coherences sustained by nonequilibrium environments~\cite{LiYang2014,LiCaiSun2015,Huangfu2018,Dodin2018,Koyu2021,Ivander2022,Donati2026}. Comparisons with fully secular predictions have revealed quantitative and even qualitative differences in stationary quantum correlations~\cite{Gonzalez2017,Wang2019,WangWang2019,DAbbruzzo2024}. Other NESS resources and their applications beyond full secularization have also been investigated, including steering and Bell nonlocality~\cite{ZhangSteering2021,ZhangBell2021}, temporal quantum correlations~\cite{Zhang2020}, quantum metrology~\cite{Wang2018Metrology,WangWang2022,Abouelkhir2026}, and quantum teleportation~\cite{Yan2025Teleportation,Yan2026QET}. Since coherence is closely linked to ergotropy~\cite{Francica2020,ChinPhysLett.43.060602}, these findings motivate a systematic examination of how full secularization affects NESS ergotropy, its coherent and incoherent contributions, and their activation mechanisms.

In this work, we study the NESS ergotropy obtained without the full secular approximation in a model of two qubits coupled to independent thermal reservoirs. For nonzero exchange coupling between two qubits, the NESS has nonzero coherence in the energy eigenbasis whenever the temperature bias is nonzero. We find that resonant qubits with the same bare frequency have only coherent ergotropy, which grows quadratically with temperature bias near equilibrium and linearly with exchange coupling in the weak-exchange limit at fixed nonzero temperature bias. In contrast, finite detuning allows incoherent ergotropy, for which we derive an exact activation criterion and show that the hotter reservoir must couple to the higher-frequency bare qubit. Finally, we compare NESS ergotropy with and without the full secular approximation and clarify how this approximation affects both the coherent and incoherent contributions.

The rest of the manuscript is organized as follows. Section~\ref{sec:model} introduces the model, dynamics, and analytical NESS.
Section~\ref{sec:steady_ergotropy} presents an analysis of NESS ergotropy and its decomposition.
Sections~\ref{sec:resonance} and~\ref{sec:detuning} analyze the resonant and detuned
cases, including ergotropy activation and optimization.
Section~\ref{sec:secular_comparison} compares predictions with and without the
full secular approximation, and Sec.~\ref{sec:conclusion} concludes. Analytical derivations are provided in
Appendixes~\ref{app:br_steady_state}--\ref{app:finite_detuning_analysis}.

\section{Model, nonsecular dynamics, and steady state}
\label{sec:model}

In this section, we set up the model and present the analytical NESS without the full secular approximation. Subsection~\ref{subsec:spectrum_transitions} introduces the model, its energy
eigenstates, and the transition frequencies.
Subsection~\ref{subsec:br_equation} specifies the local thermal reservoirs and
the partial-secular Bloch--Redfield equation.
Subsection~\ref{subsec:compact_param} presents the analytical NESS and the
parametrization used in the subsequent ergotropy analysis.

\subsection{Hamiltonian and spectrum}
\label{subsec:spectrum_transitions}


\begin{figure}[t]
\centering
\includegraphics[width=0.98\columnwidth]{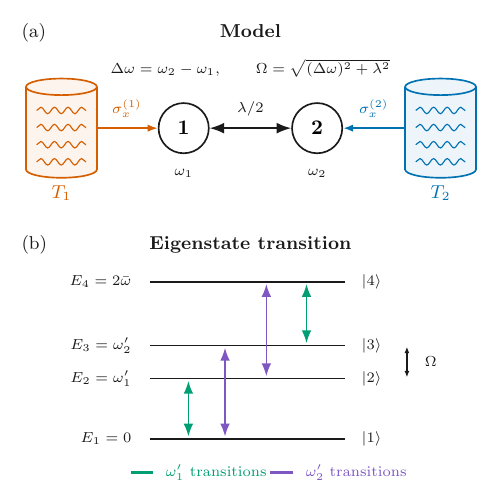}
\caption{Schematic illustration of two interacting qubits coupled to two
local thermal reservoirs.
(a) Bosonic reservoirs at $T_1,T_2$ couple locally to qubits
with bare frequencies $\omega_1,\omega_2$. The exchange amplitude is
$\lambda/2$. (b) The four eigenstates and the corresponding transitions of
the two-qubit model. The eigenenergies are given by
Eq.~\eqref{eq:transition_frequencies}. The two arrow colors group bath-induced
transitions with frequencies $\omega'_1$ and $\omega'_2$.}
\label{fig:model_mechanism}
\end{figure}

We consider two qubits with bare frequencies $\omega_1,\omega_2$ coupled by
an exchange interaction~\cite{PhysRevA.106.032212}, with Hamiltonian
\begin{equation}
H_S={}\omega_1\sigma_+^{(1)}\sigma_-^{(1)}
+\omega_2\sigma_+^{(2)}\sigma_-^{(2)}+\frac{\lambda}{2}\left(\sigma_+^{(1)}\sigma_-^{(2)}
+\sigma_-^{(1)}\sigma_+^{(2)}\right),
\label{eq:HS}
\end{equation}
with exchange strength $\lambda\ge0$. Here $\ket{g_j}$ and $\ket{e_j}$ are the
ground and excited states of qubit $j=1,2$, respectively, with
$\sigma_+^{(j)}=\ket{e_j}\bra{g_j}$ and
$\sigma_-^{(j)}=(\sigma_+^{(j)})^\dagger$. We set $\hbar=k_B=1$. This
Hamiltonian describes, for example, the effective interaction used for
coherent state transfer between two superconducting qubits dispersively
coupled to a microwave cavity~\cite{Majer2007}. If it is obtained from a
transverse qubit--qubit interaction within the rotating-wave approximation,
the omitted counter-rotating terms create or
annihilate two excitations and oscillate at the sum frequency
$\omega_1+\omega_2=2\bar\omega$. They remain perturbative only when
$\lambda\ll\omega_1+\omega_2$. At ultrastrong coupling the
excitation-number-conserving Hamiltonian in Eq.~\eqref{eq:HS} is no longer
adequate~\cite{Blais2021}. Figure~\ref{fig:model_mechanism} summarizes the
model.

The interacting Hamiltonian has the energy spectrum
\begin{equation}
(E_1,E_2,E_3,E_4)=(0,\omega'_1,\omega'_2,2\bar\omega),
\label{eq:transition_frequencies}
\end{equation}
where
\begin{equation}
\omega'_1=\bar\omega-\frac{\Omega}{2},\qquad
\omega'_2=\bar\omega+\frac{\Omega}{2},\qquad \bar\omega=\frac{\omega_1+\omega_2}{2}.
\end{equation}
Here $\Omega$ is the energy splitting between the two middle levels, namely
\begin{equation}
\Omega=\sqrt{(\Delta\omega)^2+\lambda^2},\qquad 
\Delta\omega=\omega_2-\omega_1. 
\label{eq:Delta_Omega}
\end{equation}
The effective exchange model is intended for
$\lambda\ll2\bar\omega$~\cite{Blais2021}. Therefore we only consider
$0<\Omega<2\bar\omega$, which keeps both transition frequencies
positive and fixes the level order $E_1<E_2<E_3<E_4$. The resulting spectrum
is shown in Fig.~\ref{fig:model_mechanism}.

The corresponding eigenvectors are given by 
\begin{subequations}
\label{eq:eigenstates}
\begin{align}
&\ket{1}=\ket{gg},\\
&\ket{2}=\cos\frac{\theta}{2}\ket{eg}
-\sin\frac{\theta}{2}\ket{ge},\\
&\ket{3}=\sin\frac{\theta}{2}\ket{eg}
+\cos\frac{\theta}{2}\ket{ge},\\
&\ket{4}=\ket{ee}.
\end{align}
\end{subequations}
The mixing angle is defined by 
\begin{equation}
\cos\theta=\frac{\Delta\omega}{\Omega},\qquad
\sin\theta=\frac{\lambda}{\Omega},\qquad 0\leq\theta\leq\pi.
\label{eq:theta_definition}
\end{equation}
At resonance, $\Delta\omega=0$ and $\lambda>0$ imply $\theta=\pi/2$, so the
two single-excitation eigenstates become equal-weight antisymmetric and
symmetric superpositions of $\ket{eg}$ and $\ket{ge}$, corresponding to
maximally entangled two-qubit states.

\subsection{Bloch--Redfield equation}
\label{subsec:br_equation}

We model the two bosonic reservoirs by the free Hamiltonians
\begin{equation}
H_{B_j}=\sum_k\nu_{jk}b_{jk}^\dagger b_{jk},
\label{eq:HB}
\end{equation}
where $\nu_{jk}>0$ is the frequency of mode $k$ in reservoir $j=1,2$, and
$b_{jk}$ ($b_{jk}^\dagger$) is the corresponding bosonic annihilation
(creation) operator. We use the standard transverse spin--boson interaction,
in which each qubit couples to an independent thermal
reservoir~\cite{Breuer2002,Schaller2014}:
\begin{align}
H_{SB_j}=\sigma_x^{(j)}\otimes
\sum_k(g_{jk}b_{jk}+g_{jk}^*b_{jk}^\dagger),
\label{eq:HSB}
\end{align}
where $g_{jk}$ is the complex coupling amplitude and
$\sigma_x^{(j)}=\sigma_+^{(j)}+\sigma_-^{(j)}$. The total Hamiltonian is
\begin{equation}
H=H_S+\sum_j(H_{B_j}+H_{SB_j}),    
\end{equation}
where $H_S$, $H_{B_j}$, and $H_{SB_j}$ are specified by
Eqs.~\eqref{eq:HS}, \eqref{eq:HB}, and \eqref{eq:HSB}.

We take the two reservoirs to be mutually independent, stationary, and
initially thermal at temperatures $T_j$, with an initially factorized
system--reservoir state. To derive the reduced equation from the total
Hamiltonian $H$, we expand the interaction-picture dynamics to second order
in $H_{SB_j}$ (Born approximation), replace the delayed reduced state by its
instantaneous value, and extend the bath-memory integral to infinity (Markov
approximation)~\cite{Breuer2002,Schaller2014}. Decomposing each system coupling
operator into frequency components separates rapidly and slowly rotating
contributions: terms mixing positive and negative frequencies are discarded,
whereas interference between distinct transitions of the same sign is
retained. This partial-secular step couples population and coherence dynamics
and therefore permits nonzero steady-state coherence~\cite{Eastham2016}.
Returning to the Schr\"odinger picture gives
\begin{equation}
\dot\rho=-i[H_S,\rho]+\sum_j\mathcal R_j(\rho),
\label{eq:BR_master_equation}
\end{equation}
where $i^2=-1$ and $\rho$ is the reduced two-qubit state. The contribution
from reservoir $j$ is
\begin{multline}
\mathcal R_j(\rho)
={}\sum_{\substack{
\omega,\omega'\in\{\pm\omega'_1,\pm\omega'_2\}\\
\omega\omega'>0}}
\Bigl(\Gamma_j(\omega)
 [A_j(\omega)\rho,A_j^\dagger(\omega')]\\
+\Gamma_j^*(\omega)
 [A_j(\omega'),\rho A_j^\dagger(\omega)]\Bigr),
\label{eq:Redfield_dissipator}
\end{multline}
where $A_j(\omega)$ is the transition operator and $\Gamma_j(\omega)$ is the
corresponding rate. The restriction $\omega\omega'>0$ retains only frequency
pairs with the same sign.

The energy-basis transition components of $\sigma_x^{(j)}$ are
\begin{equation}
A_j(\omega)=\sum_{E_m-E_n=\omega}
\ket n\!\bra n\,\sigma_x^{(j)}\,\ket m\!\bra m,
\label{eq:jump_operator_definition}
\end{equation}
with $A_j(-\omega)=A_j^\dagger(\omega)$. The nonzero positive-frequency
operators are
\begin{subequations}
\label{eq:jump_operators}
\begin{align}
A_1(\omega'_1)&=\cos\frac{\theta}{2}
(\ket1\!\bra2+\ket3\!\bra4),\\
A_1(\omega'_2)&=\sin\frac{\theta}{2}
(\ket1\!\bra3-\ket2\!\bra4),\\
A_2(\omega'_1)&=\sin\frac{\theta}{2}
(-\ket1\!\bra2+\ket3\!\bra4),\\
A_2(\omega'_2)&=\cos\frac{\theta}{2}
(\ket1\!\bra3+\ket2\!\bra4).
\end{align}
\end{subequations}
The relative signs arise because the two reservoirs couple to different local qubits. In the energy eigenbasis, they assign different phases to transition
amplitudes through the eigenstates $\ket2$ and $\ket3$, and therefore
enter the cross-frequency interference that sustains steady-state coherence.

The rate coefficients $\Gamma_j(\omega)$ in
Eq.~\eqref{eq:Redfield_dissipator} follow from the thermal correlation
functions. After neglecting their principal-value parts, the coefficients are
real and take the form
\begin{equation}
\Gamma_j(\omega'_a)=J(\omega'_a)(1+n_j(\omega'_a)),\quad
\Gamma_j(-\omega'_a)=J(\omega'_a)n_j(\omega'_a).
\label{eq:Gamma_rate_convention}
\end{equation}
Here
\begin{equation}
n_j(\omega)=\frac{1}{e^{\omega/T_j}-1}
\label{eq:bose_occupation}
\end{equation}
is the Bose occupation of reservoir $j$ at temperature $T_j$. We assume
identical coupling spectra for the two reservoirs. The frequency-dependent bath coupling strength at a positive Bohr frequency is
given by
\begin{equation}
J(\omega'_a)=\pi\sum_k|g_{jk}|^2\delta(\omega'_a-\nu_{jk}),
\label{eq:flat_spectrum}
\end{equation}
with $j,a=1,2$, where $j$ labels the reservoir, $a$ labels the transition
frequency, and $\delta(\cdot)$ is the Dirac delta distribution. Since
$\omega'_{1,2}=\bar\omega\mp\Omega/2$, the two transitions probe a frequency
window of width $\Omega$. When the bath coupling profile varies on a much
larger frequency scale,
one can use
$J(\omega'_1)\simeq J(\omega'_2)=J$. This flat-spectrum approximation is
standard in weak-coupling Markovian treatments of smooth broadband
reservoirs~\cite{Breuer2002,Schaller2014}. Moreover, the steady-state
predictions were found to be insensitive to the detailed spectral shape in
the Markovian regime~\cite{Wang2019}.

The total dissipator can be separated as
\begin{equation}
\sum_j\mathcal R_j(\rho)
=\mathcal D_0(\rho)+\mathcal D_s(\rho),
\label{eq:D0_Ds_split}
\end{equation}
where $\mathcal D_0$ collects the contributions with
$\omega=\omega'$, whereas $\mathcal D_s$ collects the cross-frequency
contributions with $\omega\ne\omega'$ and $\omega\omega'>0$. These
cross-frequency terms describe indistinguishable relaxation or excitation
paths and can sustain steady-state coherence and alter population
relaxation~\cite{Eastham2016}. Full secularization gives $\mathcal D_s=0$.
Thus Eq.~\eqref{eq:BR_master_equation} is a partial-secular Redfield equation
in the interacting energy basis, rather than a local Lindblad equation.

Although the partial-secular Redfield equation retains cross-frequency effects
that can be important out of equilibrium, its dynamics does not guarantee
complete positivity. Different approaches can restore positivity
to the Redfield equation~\cite{FarinaGiovannetti2019,NathanRudner2020,
Trushechkin2021,Potts2021}. Accordingly, we restrict the reported results to
parameter values for which the stationary solution is positive semidefinite.

\subsection{Analytical nonequilibrium steady state}
\label{subsec:compact_param}

From the Bloch--Redfield equation in
Eq.~\eqref{eq:BR_master_equation}, we obtain the NESS analytically. In the
energy eigenbasis it takes the form
\begin{equation}
\rho^{\rm ss}=\begin{pmatrix}
p_1&0&0&0\\0&p_2&c&0\\0&c^*&p_3&0\\0&0&0&p_4
\end{pmatrix}.
\label{eq:ss_block}
\end{equation}
Here the diagonal entries are the energy-eigenstate populations, and the only
nonvanishing coherence is $c=\bra{2}\rho^{\rm ss}\ket{3}$.

The generator preserves excitation-number covariance. Because the
zero- and two-excitation subspaces are one dimensional, only their populations
can appear. The one-excitation subspace additionally permits the coherence
between $\ket{2}$ and $\ket{3}$~\cite{Wang2019,Eastham2016}. Full
secularization discards the cross-frequency terms that couple these two
eigenstates. For the nondegenerate spectrum considered here, the secular
stationary state is therefore diagonal. The corresponding
populations must be
recomputed from the secular generator rather than obtained by simply deleting
$c$ from Eq.~\eqref{eq:ss_block}.

The surviving NESS coherence has the analytical expression
\begin{equation}
c=\frac{S}{N(2A-i\Omega/J)}.
\label{eq:rho23_complex}
\end{equation}
Here $N$ is the normalization factor
\begin{equation}
N=(1+2\tilde n_1)(1+2\tilde n_2)
-\frac{4A^2\delta n_1\delta n_2}
{A^2+\left(\Omega/2J\right)^2}.
\label{eq:N_definition}
\end{equation}
The remaining parameters $A$ and $S$ are
\begin{subequations}
\begin{align}
A&=1+\tilde n_1+\tilde n_2,
\label{eq:A_definition}\\
S&=\delta n_1(1+2\tilde n_2)+\delta n_2(1+2\tilde n_1).
\label{eq:Sb}
\end{align}
\end{subequations}
All three quantities $N$, $A$, and $S$ are dimensionless. The effective
occupations and bath-contrast terms entering them are
\begin{subequations}
\begin{align}
&\tilde n_1=\cos^2\left(\frac{\theta}{2}\right)n_1(\omega'_1)
+\sin^2\left(\frac{\theta}{2}\right)n_2(\omega'_1),\label{eq:ntilde1}\\
&\tilde n_2=\sin^2\left(\frac{\theta}{2}\right)n_1(\omega'_2)
+\cos^2\left(\frac{\theta}{2}\right)n_2(\omega'_2),\label{eq:ntilde2}\\
&\delta n_j=\frac{\sin\theta}{2}
 (n_1(\omega'_j)-n_2(\omega'_j)).
\label{eq:delta_n_main}
\end{align}
\end{subequations}
Here $j=1,2$. The effective occupations $\tilde n_j$ are mixing-weighted bath
populations at the two frequencies $\omega_1'$ and $\omega_2'$, whereas
$\delta n_j$ quantify the bath contrast that drives interference between the
two transition pathways. Thus $S$ is the nonequilibrium source of $c$, and it
requires both temperature bias and eigenstate mixing.

The complete expressions for the four populations are given in
Appendix~\ref{app:br_steady_state}. For the ergotropy analysis in
Sec.~\ref{sec:steady_ergotropy}, the key population combination is the
imbalance $\Delta p=p_3-p_2$, whose exact expression is
\begin{equation}
\Delta p=\frac{\tilde n_2-\tilde n_1
+K((\delta n_2)^2-(\delta n_1)^2)}{N},
\label{eq:Delta_p_analytic}
\end{equation}
with
\begin{equation}
K=\frac{A}{A^2+\left(\Omega/2J\right)^2}.
\label{eq:K_definition}
\end{equation}
The factor $K$ is also dimensionless.
The difference $\tilde n_2-\tilde n_1$ is the direct population bias between
the two transition frequencies. The term proportional to $K$ is the population
correction generated by the cross-frequency dynamics. Together, $\Delta p$
and $|c|$ set the eigenvalue separation in the one-excitation block and are
therefore the physical variables needed for the ergotropy analysis.

This analytical construction follows Refs.~\cite{Wang2019,Zhang2020}. We use
it only at parameter points for which $N>0$ and $\rho^{\rm ss}$ is positive
semidefinite. As a consistency check,
when $T_1=T_2=T$, all $\delta n_j$ vanish, so $c=0$ and
Eq.~\eqref{eq:ss_block} reduces to the Gibbs
state $e^{-H_S/T}/\operatorname{Tr}(e^{-H_S/T})$. When $\lambda=0$ at nonzero
detuning, $\sin\theta=0$ also gives $c=0$, but unequal reservoirs produce
the product of the two local thermal states.

\section{Ergotropy of the nonequilibrium steady state}
\label{sec:steady_ergotropy}

In this section, we introduce ergotropy and its decomposition in
Subsection~\ref{subsec:ergotropy_def}. We then use the structure of the NESS in
Eq.~\eqref{eq:ss_block} to analyze its coherent and incoherent contributions in
Subsections~\ref{subsec:coherence_activation} and~\ref{subsec:coh_incoh},
respectively.

\subsection{Ergotropy and coherent--incoherent decomposition}
\label{subsec:ergotropy_def}

For a state $\rho$ of a system with Hamiltonian $H_S$, ergotropy is the maximum
work extractable through a cyclic unitary process~\cite{Allahverdyan2004}:
\begin{equation}
\mathcal E(\rho,H_S)=\Tr(\rho H_S)-\min_U\Tr(U\rho U^\dagger H_S).
\label{eq:ergotropy_def}
\end{equation}
Here the minimization is over all global unitaries on the system, and cyclicity
means that the Hamiltonian returns to $H_S$ at the end of the process. To
evaluate the minimum, write
$H_S=\sum_\alpha E_\alpha\ket{\alpha}\bra{\alpha}$ with
$E_1\le E_2\le\cdots$, and order the eigenvalues of $\rho$ as
$r_1^\downarrow\ge r_2^\downarrow\ge\cdots$. Since a unitary preserves these
eigenvalues, the minimum energy is attained by the passive state
$\pi_\rho=\sum_\alpha r_\alpha^\downarrow\ket{\alpha}\bra{\alpha}$, which
pairs the largest eigenvalue of the state with the lowest energy, the next largest
with the next-lowest energy, and so on. A state already arranged in this way is
passive and has zero ergotropy~\cite{PuszWoronowicz1978,Lenard1978,Allahverdyan2004}.
Therefore,
\begin{equation}
\mathcal E(\rho,H_S)=\Tr(\rho H_S)-\Tr(\pi_\rho H_S).
\end{equation}
Thus calculating ergotropy requires diagonalizing $\rho$, sorting its
eigenvalues in decreasing order, and pairing them with the energy eigenstates
of $H_S$ in order of increasing energy.

Ergotropy can originate from nonpassive energy populations and from coherence in the energy eigenbasis, motivating its decomposition into incoherent and coherent contributions~\cite{Francica2020,ChinPhysLett.43.060602}. For nondegenerate $H_S$, define energy dephasing as $\mathcal D_H(\rho)=\sum_\alpha\ket{\alpha}\bra{\alpha}\rho
\ket{\alpha}\bra{\alpha}$. This map preserves the mean energy, and the incoherent ergotropy is defined as~\cite{Lobejko2022}
\begin{equation}
\label{eq:general_incoherent_definition}    
\mathcal E_{\rm inc}=\mathcal E(\mathcal D_H(\rho),H_S).
\end{equation}
The remainder is the coherent ergotropy, namely
\begin{equation}
\label{eq:Ecoh_def}
\mathcal E_{\rm coh}=\mathcal E(\rho,H_S)-\mathcal E_{\rm inc}.
\end{equation}
Energy dephasing leaves both the energy populations and the mean energy
unchanged, but it removes the off-diagonal elements that alter the spectrum.
Thus $\mathcal E_{\rm inc}$ measures activity already visible in
the energy populations, whereas $\mathcal E_{\rm coh}$ is the additional work
made available because coherence changes the eigenvalues entering the passive
rearrangement.

\subsection{Coherent ergotropy activation}
\label{subsec:coherence_activation}

The structure of the NESS in Eq.~\eqref{eq:ss_block} permits an analytical
evaluation of the NESS ergotropy. In particular, it gives the following
result for the activation of coherent ergotropy.

\begin{theorem}
\label{thm:coherence_activation}
Consider any physical NESS of the analytical stationary solution in
Eq.~\eqref{eq:ss_block} with system Hamiltonian $H_S$
in Eq.~\eqref{eq:HS}. Its coherent ergotropy is
\begin{equation}
\mathcal E_{\rm coh}=\frac{\Omega}{2}
(\sqrt{(\Delta p)^2+4|c|^2}-|\Delta p|),
\label{eq:Ecoh_raw}
\end{equation}
and is nonzero if and only if $c\ne0$.
\end{theorem}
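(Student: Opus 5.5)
The plan is to compute both $\mathcal E(\rho^{\rm ss},H_S)$ and $\mathcal E_{\rm inc}=\mathcal E(\mathcal D_H(\rho^{\rm ss}),H_S)$ explicitly and then subtract. Because the block structure of Eq.~\eqref{eq:ss_block} leaves the spectra of the two states differing only inside the one-excitation block, the difference reduces to a single term.

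First, the states and their spectra. The dephased state is $\mathcal D_H(\rho^{\rm ss})=\mathrm{diag}(p_1,p_2,p_3,p_4)$, with eigenvalues $\{p_1,p_2,p_3,p_4\}$. The full state $\rho^{\rm ss}$ has eigenvalues $p_1,p_4$ and $r_\pm=\bar p\pm\frac12\sqrt{(\Delta p)^2+4|c|^2}$, where $\bar p=(p_2+p_3)/2$. Write $q_\pm=\bar p\pm|\Delta p|/2$ for $\max(p_2,p_3)$ and $\min(p_2,p_3)$. Then $r_+\ge q_+\ge q_-\ge r_-$ with $r_++r_-=q_++q_-$, so $(r_+,r_-)$ majorizes $(q_+,q_-)$. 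The mean energies coincide, since dephasing preserves $\Tr(\rho H_S)$. Hence
\[
\mathcal E_{\rm coh}=\Tr(\pi_{\mathcal D_H(\rho)}H_S)-\Tr(\pi_\rho H_S),
\]
a difference of passive energies only.

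The key step compares the two passive energies. The passive energy is the sum $\sum_\alpha r^\downarrow_\alpha E_\alpha$ over the levels $E=(0,\omega'_1,\omega'_2,2\bar\omega)$. I need to show that replacing the pair $(q_+,q_-)$ by $(r_+,r_-)$ lowers it by exactly $\frac{\Omega}{2}(r_+-r_--(q_+-q_-))$, where $\Omega=E_3-E_2$. This holds if, in both sorted lists, the two middle-block eigenvalues occupy the sorted positions $2$ and $3$, i.e.\ $p_1\ge r_+$ and $r_-\ge p_4$ (and hence also for $q_\pm$). I expect this to be the main obstacle, since it is a property of the physical NESS and does not follow from positivity alone.

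To establish it, I would use the population formulas of Appendix~\ref{app:br_steady_state} together with the structure of $\mathcal R_j$. The Redfield rates have $\Gamma_j(\omega)\ge\Gamma_j(-\omega)$ at positive temperatures. This should give $p_1\ge p_2,p_3\ge p_4$, ideally in a product-like form $p_1p_4\approx p_2p_3$ with positive effective occupations. The margin must then be strong enough to absorb $|c|$: show $(p_1-p_2)(p_1-p_3)\ge|c|^2$ and $(p_2-p_4)(p_3-p_4)\ge|c|^2$. These follow if the $3\times3$ principal minors of $\rho^{\rm ss}-p_4\mathbb 1$ and $p_1\mathbb 1-\rho^{\rm ss}$ restricted to the block are positive semidefinite.

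If a fully general inequality is out of reach, I would fall back on a more general rearrangement argument. For sorted energies, any increase in the spread of a pair of eigenvalues at fixed sum lowers the passive energy by a nonnegative amount bounded below by $E_3-E_2$ times half the spread increase. This would at least give the "iff" part. The exact formula, however, needs the ordering, so I would commit to proving the ordering from the explicit $p_i$. Granting it, the subtraction yields
\[
\mathcal E_{\rm coh}=\frac{\Omega}{2}\left(\sqrt{(\Delta p)^2+4|c|^2}-|\Delta p|\right).
\]

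Finally, the "iff" statement. Since $\Omega>0$, the bracket vanishes precisely when $\sqrt{(\Delta p)^2+4|c|^2}=|\Delta p|$, i.e.\ when $|c|=0$. When $c\ne0$ the bracket is strictly positive, including in the degenerate case $\Delta p=0$, where $\mathcal E_{\rm coh}=\Omega|c|$.
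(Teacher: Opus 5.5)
Your reduction matches the paper's. Dephasing preserves the mean energy, so $\mathcal E_{\rm coh}$ is a difference of passive energies, and the closed form follows once $p_1\ge r_+$ and $r_-\ge p_4$ place the block eigenvalues at $E_2,E_3$. The gap is that this ordering, the only nontrivial content of the theorem, is left unproved.

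Your conditions $(p_1-p_2)(p_1-p_3)\ge|c|^2$ and $(p_2-p_4)(p_3-p_4)\ge|c|^2$ just restate $p_1\ge r_+$ and $r_-\ge p_4$. The rate asymmetry $\Gamma_j(\omega)\ge\Gamma_j(-\omega)$ at best orders the secular baseline populations. It gives no control over the nonsecular population corrections or over $|c|$. More decisively, for this solution $r_-\ge p_4$ is \emph{equivalent} to $p_4\ge0$, as shown below. Positivity is therefore the essential input, and an argument from the rate structure alone cannot succeed, because the Redfield stationary solution is not guaranteed to be positive.

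The missing idea, supplied in Appendix~\ref{app:automatic_ordering}, is a one-parameter reduction. Set $D=A^2+(\Omega/2J)^2$ and $X=[D+(\delta n_1-\delta n_2)^2]/(DN)>0$.
\begin{itemize}
\item The exact populations collapse to $p_1=[1+(2A+1)X]/4$, $p_4=[1-(2A-1)X]/4$ and $p_2+p_3=(1-X)/2$.
\item An algebraic identity gives $R^2\equiv(\Delta p)^2+4|c|^2=A^2X^2-X$.
\item Hence $r_--p_4=[(A-1)X-R]/2$ and $p_1-r_+=[(A+1)X-R]/2$.
\item Finally, $R^2-(A-1)^2X^2=X[(2A-1)X-1]$, which is $\le0$ precisely when $p_4\ge0$.
\end{itemize}
Since $A\ge1$, this yields $R\le(A-1)X$, hence both orderings, with $p_1-r_+\ge X>0$.

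Separately, your fallback is false as stated. A drop of at least $(E_3-E_2)$ times half the spread increase requires the pair to sit in sorted positions $2,3$, which is the ordering itself. Take energies $(0,0.01,1,1.01)$, which are consistent with this model's spectrum, outer eigenvalues $0.1,0.1$, and a middle pair changing from $0.4,0.4$ to $0.45,0.35$. The passive energy drops by $5\times10^{-4}$, not by roughly $0.05$.

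The ``iff'' part can still be rescued without the ordering. All level gaps $\omega'_1,\Omega,\omega'_1$ are positive, and for $c\ne0$ at least one partial sum of the smallest eigenvalues strictly decreases. The exact formula, however, needs the ordering.
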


\begin{proof}
The block of Eq.~\eqref{eq:ss_block} on the coherent subspace spanned by
$\{\ket2,\ket3\}$ has eigenvalues
\begin{equation}
r_\pm=\frac{p_2+p_3}{2}\pm\frac12\sqrt{(\Delta p)^2+4|c|^2}.
\label{eq:rpm}
\end{equation}
For the analytical NESS of the present model, positivity of the stationary
state together with the exact stationary relations implies the ordering
\begin{equation}
p_1\ge r_+\ge r_-\ge p_4.
\label{eq:automatic_eigenvalue_ordering}
\end{equation}
The complete analytical proof is given in
Appendix~\ref{app:automatic_ordering}.
Moreover, $r_+\ge\max(p_2,p_3)$ and $r_-\le\min(p_2,p_3)$. Thus $p_1$
remains paired with $E_1$ and $p_4$ with $E_4$, while only the two middle
populations can be reordered. Because dephasing also preserves the
mean energy, Eqs.~\eqref{eq:general_incoherent_definition}
and~\eqref{eq:Ecoh_def} give
\begin{multline}
\mathcal E_{\rm coh}
=E_2(\max(p_2,p_3)-r_+)
  +E_3(\min(p_2,p_3)-r_-)\\
=\frac{\Omega}{2}
  (\sqrt{(\Delta p)^2+4|c|^2}-|\Delta p|).
\end{multline}
This proves
Eq.~\eqref{eq:Ecoh_raw}. Since $\Omega>0$ and
$\sqrt{(\Delta p)^2+4|c|^2}>|\Delta p|$ exactly when $c\ne0$,
$\mathcal E_{\rm coh}>0$ if and only if $c\ne0$.
\end{proof}

Because $\mathcal E=\mathcal E_{\rm inc}+\mathcal E_{\rm coh}$ with
$\mathcal E_{\rm inc}\ge0$, the theorem also implies
$\mathcal E>0$ whenever $c\ne0$.
For a physical state with strictly ordered energies, coherence-only
activation is equivalent to $p_1\ge p_2\ge p_3\ge p_4$ and $c\ne0$.
Population ordering makes the dephased state passive, whereas the presence of
coherence makes positive work extractable from the NESS. The passive
rearrangement is attained by a global unitary that diagonalizes the
single-excitation subspace, assigns $r_+$ and $r_-$ to $E_2$ and $E_3$,
respectively, and leaves the outer energy levels fixed.

\subsection{Incoherent and total ergotropy}
\label{subsec:coh_incoh}

The NESS structure in Eq.~\eqref{eq:ss_block} also gives a compact expression for the incoherent ergotropy.

\begin{theorem}
\label{thm:incoherent_ergotropy}
Consider any physical NESS of the analytical stationary solution in
Eq.~\eqref{eq:ss_block} with system Hamiltonian $H_S$
in Eq.~\eqref{eq:HS}. Its incoherent ergotropy is
\begin{equation}
\mathcal E_{\rm inc}=\Omega\max(0,\Delta p).
\label{eq:E_inc_compact}
\end{equation}
Equivalently, the dephased NESS is passive if and only if $\Delta p\le0$.
\end{theorem}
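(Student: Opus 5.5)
The plan is to compute the ergotropy of the dephased state $\mathcal D_H(\rho^{\rm ss})=\mathrm{diag}(p_1,p_2,p_3,p_4)$ directly from the passive-state construction of Subsection~\ref{subsec:ergotropy_def}. Since the spectrum in Eq.~\eqref{eq:transition_frequencies} is nondegenerate with $E_1<E_2<E_3<E_4$ (because $0<\Omega<2\bar\omega$), the passive state pairs the decreasingly sorted populations with increasing energies. The incoherent ergotropy is then
\[
\mathcal E_{\rm inc}=\sum_\alpha E_\alpha p_\alpha-\sum_\alpha E_\alpha p^\downarrow_\alpha ,
\]
so the whole task reduces to identifying the decreasing ordering of $(p_1,p_2,p_3,p_4)$.

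The key input is the ordering Eq.~\eqref{eq:automatic_eigenvalue_ordering}, $p_1\ge r_+\ge r_-\ge p_4$, established in Appendix~\ref{app:automatic_ordering} and already used in Theorem~\ref{thm:coherence_activation}. It combines with the elementary facts $r_+\ge\max(p_2,p_3)$ and $r_-\le\min(p_2,p_3)$. The latter follow because $r_\pm$ are the eigenvalues of the Hermitian $2\times2$ block, so its diagonal entries lie between them; equivalently, $\sqrt{(\Delta p)^2+4|c|^2}\ge|\Delta p|$. Together these give
\[
p_1\ge\max(p_2,p_3)\ge\min(p_2,p_3)\ge p_4 .
\]
Hence in the sorted list $p_1$ stays at $E_1$ and $p_4$ stays at $E_4$, and only the middle pair can be exchanged. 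This is the step carrying the real content. Since it is supplied by the appendix result already invoked for Theorem~\ref{thm:coherence_activation}, I would cite it rather than reprove it. It is also the only place where physicality (positivity and the stationary relations) enters.

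With the ordering fixed, I split into two cases.
\begin{itemize}
\item If $\Delta p=p_3-p_2\le0$, the populations are already nonincreasing in energy. The dephased state is therefore passive and $\mathcal E_{\rm inc}=0$.
\item If $\Delta p>0$, the passive state swaps $p_2$ and $p_3$, giving
\[
\mathcal E_{\rm inc}=E_2p_2+E_3p_3-E_2p_3-E_3p_2=(E_3-E_2)(p_3-p_2)=\Omega\,\Delta p,
\]
using $E_3-E_2=\omega'_2-\omega'_1=\Omega$.
\end{itemize}
The two cases combine into $\mathcal E_{\rm inc}=\Omega\max(0,\Delta p)$, which is Eq.~\eqref{eq:E_inc_compact}.

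For the equivalence, $\Omega>0$ gives $\mathcal E_{\rm inc}=0$ if and only if $\Delta p\le0$. A state has zero ergotropy if and only if it is passive, since its energy then equals that of its passive rearrangement; here this follows directly from the case analysis. This proves the stated equivalence.
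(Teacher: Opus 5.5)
Your proposal is correct and follows the paper's proof essentially step for step. You combine the appendix ordering $p_1\ge r_+\ge r_-\ge p_4$ with $r_+\ge\max(p_2,p_3)$ and $r_-\le\min(p_2,p_3)$ to fix $p_1$ and $p_4$ in their passive positions, then split on the sign of $\Delta p$ to obtain $\mathcal E_{\rm inc}=\Omega\max(0,\Delta p)$, with the passivity equivalence following from the same case analysis and $\Omega>0$.
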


\begin{proof}
Energy dephasing removes $c$ while leaving the populations unchanged, so
$\mathcal D_H(\rho^{\rm ss})=\operatorname{diag}(p_1,p_2,p_3,p_4)$. From
$p_1\ge r_+$, $r_-\ge p_4$, $r_+\ge\max(p_2,p_3)$, and
$r_-\le\min(p_2,p_3)$, one obtains
\begin{equation}
p_1\ge\max(p_2,p_3)\ge\min(p_2,p_3)\ge p_4.
\end{equation}
Thus $p_1$ and $p_4$ already have their passive assignments, and only the two
middle populations may need to be reordered. If $\Delta p=p_3-p_2\le0$, then
$p_2\ge p_3$ and the dephased state is passive, giving
$\mathcal E_{\rm inc}=0$. If $\Delta p>0$, its passive state is obtained by
exchanging $p_2$ and $p_3$, and hence
\begin{equation}
\mathcal E_{\rm inc}
=(p_2E_2+p_3E_3)-(p_3E_2+p_2E_3)=\Omega\Delta p.
\end{equation}
Combining the two cases proves Eq.~\eqref{eq:E_inc_compact} and the stated
passivity criterion.
\end{proof}

Combining Theorem~\ref{thm:incoherent_ergotropy} with
Theorem~\ref{thm:coherence_activation} gives the total ergotropy
\begin{equation}
\mathcal E=\frac{\Omega}{2}
(\sqrt{(\Delta p)^2+4|c|^2}+\Delta p).
\label{eq:E_total_compact}
\end{equation}
At fixed $\Omega$ and $|c|$, reducing $|\Delta p|$ increases the coherent
ergotropy. For nonzero $|c|$ and
$\Delta p\le0$, the ergotropy is coherence-only. For
$\Delta p>0$ and $|c|>0$, population and coherence contributions coexist.
For $c=0$, activity instead depends on population ordering, and $\Delta p=0$ is a population-ordering threshold. At fixed $H_S$ and populations, the small-coherence response is
quadratic for $\Delta p\ne0$:
\begin{equation}
\mathcal E_{\rm coh}=\frac{\Omega|c|^2}{|\Delta p|}
-\frac{\Omega|c|^4}{|\Delta p|^3}
+O\!\left(\frac{\Omega|c|^6}{|\Delta p|^5}\right),
\label{eq:fixed_population_response}
\end{equation}
for $|c|\ll|\Delta p|$, whereas
$\mathcal E_{\rm coh}=\Omega|c|$ exactly at $\Delta p=0$.

The preceding analysis identifies $c$ and $\Delta p$ as the two state-level
quantities controlling distinct mechanisms of ergotropy activation: coherence
and population ordering, respectively. This characterization, however, does
not yet reveal how the physical parameters of the model, such as the reservoir
temperatures, qubit detuning, and exchange coupling, determine the NESS ergotropy. We therefore turn to a systematic analysis of the
dependence of NESS ergotropy on these physical parameters, beginning with the
resonant case $\Delta\omega=0$.

\section{Resonant case: coherence-only ergotropy}
\label{sec:resonance}

In this section, we first show in Sec.~\ref{subsec:resonant_coherence_only} that the NESS ergotropy of two resonant qubits with $\Delta\omega=0$ is entirely coherent. We then discuss the influence of temperature bias and exchange coupling strength on NESS ergotropy at resonance in Secs.~\ref{subsec:resonant_bias} and~\ref{subsec:resonant_exchange}, respectively.

\subsection{Coherence-only ergotropy activation}
\label{subsec:resonant_coherence_only}

\begin{figure*}[t]
\centering
\includegraphics[width=0.98\textwidth]{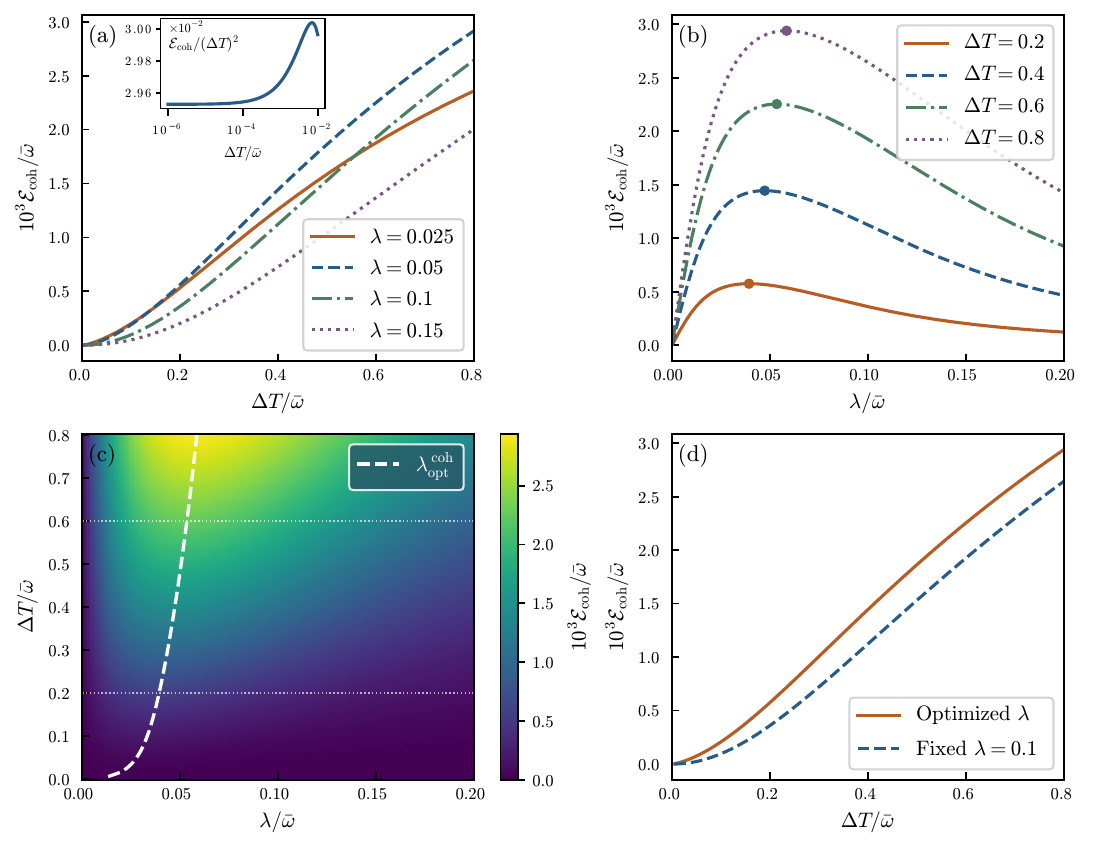}
\caption{Coherence-only ergotropy of the NESS at resonance ($\Delta\omega=0$). All panels use the parameters
$T_1/\bar\omega=0.2$, $T_2=T_1+\Delta T$, and $J/\bar\omega=0.01$. (a) Coherent ergotropy of the NESS as a function of the temperature bias $\Delta T$ for different exchange couplings $\lambda$. The inset shows $\mathcal E_{\rm coh}/(\Delta T)^2$ near equilibrium with $\lambda/\bar\omega=0.05$. (b) Coherent ergotropy of the NESS as a function of the exchange coupling $\lambda$ for different temperature biases $\Delta T$.
Filled dots mark independently optimized maxima. (c) Coherent ergotropy in the $(\lambda,\Delta T)$ plane. The white dashed curve marks
the optimized value $\lambda_{\rm opt}^{\rm coh}$, and the white dotted
horizontal lines indicate $\Delta T/\bar\omega=0.2,0.6$.
(d) Coherent ergotropy as a function of the temperature bias $\Delta T$ with optimized coupling
$\lambda=\lambda_{\rm opt}^{\rm coh}$ (solid) and fixed
$\lambda/\bar\omega=0.1$ (dashed).}
\label{fig:resonance}
\end{figure*}

We refer to the case of two qubits with the same bare frequencies, $\Delta\omega=0$, as the resonant case. Correspondingly, the mixing angle $\theta$ defined in Eq.~\eqref{eq:theta_definition} equals $\pi/2$, which gives two maximally entangled eigenstates of the model. Furthermore, the incoherent ergotropy vanishes and only coherent ergotropy is present in the NESS.

\begin{theorem}
\label{thm:resonant_coherence_only}
Consider any physical NESS with $\Delta\omega=0$ and
$0<\lambda<2\bar\omega$. Then $\Delta p<0$ and the incoherent ergotropy vanishes, giving $\mathcal E_{\rm inc}=0$ and $\mathcal E=\mathcal E_{\rm coh}$. 
\end{theorem}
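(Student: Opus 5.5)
We show $\Delta p<0$ directly from the exact expression in Eq.~\eqref{eq:Delta_p_analytic}. Theorem~\ref{thm:incoherent_ergotropy} then gives $\mathcal E_{\rm inc}=\Omega\max(0,\Delta p)=0$, and hence $\mathcal E=\mathcal E_{\rm coh}$. Because the NESS is physical, $N>0$, so it suffices to show that the numerator
\begin{equation*}
\tilde n_2-\tilde n_1+K\bigl((\delta n_2)^2-(\delta n_1)^2\bigr)
\end{equation*}
is strictly negative.

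First, specialize to $\theta=\pi/2$. Then $\cos^2(\theta/2)=\sin^2(\theta/2)=1/2$ and $\sin\theta=1$. Write $n_j^{(a)}=n_j(\omega'_a)$. This gives
\begin{equation*}
\tilde n_a=\tfrac12\bigl(n_1^{(a)}+n_2^{(a)}\bigr),\qquad \delta n_a=\tfrac12\bigl(n_1^{(a)}-n_2^{(a)}\bigr).
\end{equation*}
The two terms can then be combined. Since $K>0$ and $(\delta n_a)^2\le \tilde n_a^2$ (as $n_j\ge0$), I will bound $K(\delta n_a)^2$ against $\tilde n_a$. A cleaner route is to bound $K$ itself:
\begin{equation*}
K=\frac{A}{A^2+(\Omega/2J)^2}\le \frac1A=\frac{1}{1+\tilde n_1+\tilde n_2}.
\end{equation*}
The numerator is therefore at most
\begin{equation*}
\tilde n_2-\tilde n_1+\frac{(\delta n_2)^2-(\delta n_1)^2}{A}\quad\text{when } (\delta n_2)^2\ge(\delta n_1)^2 .
\end{equation*}
In the opposite case the $K$ term is already nonpositive.

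Second, use monotonicity. Here $\omega'_1=\bar\omega-\lambda/2<\omega'_2=\bar\omega+\lambda/2$ with $\omega'_1>0$ because $\lambda<2\bar\omega$. Since each $n_j$ is strictly decreasing in frequency, $n_j^{(2)}<n_j^{(1)}$ for both $j$, and therefore $\tilde n_2<\tilde n_1$. It remains to show that the correction cannot overturn this. Using $|\delta n_a|\le\tilde n_a$, we have $(\delta n_2)^2-(\delta n_1)^2\le \tilde n_2^2$. It then suffices to show
\begin{equation*}
\tilde n_2^2<(\tilde n_1-\tilde n_2)(1+\tilde n_1+\tilde n_2).
\end{equation*}
This step is the main obstacle, because $\tilde n_1-\tilde n_2$ can be small when $\lambda$ is small. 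The crude bound $\tilde n_2^2$ is then too loose, since it does not vanish as $\lambda\to0$.

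I would instead treat the correction exactly. Note that
\begin{equation*}
(\delta n_2)^2-(\delta n_1)^2=(\delta n_2-\delta n_1)(\delta n_2+\delta n_1).
\end{equation*}
Write $x_j=n_j^{(1)}-n_j^{(2)}>0$, and assume without loss of generality $T_1\ge T_2$. I expect $f(\omega)=n_1(\omega)-n_2(\omega)$ to be decreasing in the relevant range, so that $|\delta n_2|\le|\delta n_1|$ and the correction is nonpositive. However, $f$ is not monotone in general: it is positive and tends to $0$ at both ends. So I would fall back on a difference estimate:
\begin{equation*}
\bigl|(\delta n_2)^2-(\delta n_1)^2\bigr|\le |\delta n_1-\delta n_2|\,(|\delta n_1|+|\delta n_2|)\le \tfrac12(x_1+x_2)\cdot\tfrac12\bigl(\tilde n_1+\tilde n_2\bigr)\cdot 2 .
\end{equation*}
Here I used $|n_1^{(a)}-n_2^{(a)}|\le n_1^{(a)}+n_2^{(a)}$. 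Since $\tilde n_1-\tilde n_2=\tfrac12(x_1+x_2)$, this yields
\begin{equation*}
K\bigl|(\delta n_2)^2-(\delta n_1)^2\bigr|\le \frac{\tilde n_1+\tilde n_2}{1+\tilde n_1+\tilde n_2}\,(\tilde n_1-\tilde n_2)<\tilde n_1-\tilde n_2 .
\end{equation*}
The numerator is therefore strictly negative, so $\Delta p<0$.

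I would double-check the factor bookkeeping in the last estimate, since it is exactly where strictness is decided. The strict inequality comes from $K\le 1/A$ together with the extra $1$ in $A$.
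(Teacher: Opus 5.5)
Your final argument is correct, but it takes a different route from the paper. The paper shows that the nonsecular correction has the \emph{same} sign as the direct bias. It uses that $n(\omega,T_{\rm h})-n(\omega,T_{\rm c})$ is strictly decreasing in $\omega$, because $-\partial_\omega n=1/(4T\sinh^2(\omega/2T))$ grows with $T$, as in the proof of Corollary~\ref{cor:activation_direction}. Hence $|\delta n_1|>|\delta n_2|$, and both $\tilde n_2-\tilde n_1$ and $K((\delta n_2)^2-(\delta n_1)^2)$ are negative, with no competition between them.

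You instead bound the magnitude of the correction without using its sign. Since $\delta n_1-\delta n_2=\tfrac12(x_1-x_2)$ with $x_j>0$, you get $|\delta n_1-\delta n_2|\le\tilde n_1-\tilde n_2$. Combined with $|\delta n_1|+|\delta n_2|\le\tilde n_1+\tilde n_2$ and $K\le 1/A$, the factor bookkeeping you were unsure about checks out. It even gives the explicit margin $N\Delta p\le-(\tilde n_1-\tilde n_2)/A$, because $A-\tilde n_1-\tilde n_2=1$.

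The paper's route is shorter and shows that the cross-frequency terms reinforce passive order. Yours needs only that each occupation is nonnegative and decreasing in frequency, so it does not depend on the Bose-specific comparison of slopes at two temperatures.

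One correction: your claim that $f(\omega)=n_1(\omega)-n_2(\omega)$ tends to zero at both ends and is not monotone is false. As $\omega\to0^+$, $n(\omega,T)\simeq T/\omega-\tfrac12$, so $f$ diverges like $(T_1-T_2)/\omega$. For $T_1\neq T_2$, $f$ is strictly monotone on $(0,\infty)$ by the slope argument above. The route you set aside is exactly the paper's and works. Your fallback estimate never uses that claim, so the proof stands, but you should delete the sentence.
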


\begin{proof}
At resonance, $\theta=\pi/2$, $\Omega=\lambda$, and the effective occupation parameters in Eqs.~\eqref{eq:ntilde1}--\eqref{eq:delta_n_main} reduce to
\begin{equation}
\tilde n_j=\frac{n_1(\omega'_j)+n_2(\omega'_j)}{2},\qquad \delta n_j=\frac{n_1(\omega'_j)-n_2(\omega'_j)}{2},
\label{eq:resonant_effective_occupations}
\end{equation}
with $j=1,2$. Since the transition frequencies satisfy $\omega'_1<\omega'_2$ and every Bose occupation decreases with
frequency, $\tilde n_1>\tilde n_2$.

For a nonequilibrium environment with unequal temperatures, the function
$|n_1(\omega)-n_2(\omega)|$ is also strictly decreasing for $\omega>0$.
Indeed, for $T_{\rm h}>T_{\rm c}$, its derivative is strictly negative:
\begin{equation}
\frac{d}{d\omega}\left(n(\omega,T_{\rm h})-n(\omega,T_{\rm c})\right)<0.
\label{eq:bose_difference_monotonicity}
\end{equation}
Hence
$|\delta n_1|>|\delta n_2|$ when $T_1\ne T_2$, while both vanish when
$T_1=T_2$. Since the parameters $K$ and $N$ given by Eqs.~\eqref{eq:K_definition} and~\eqref{eq:N_definition}, respectively, are both positive, Eq.~\eqref{eq:Delta_p_analytic} therefore
gives
\begin{equation}
\Delta p=\frac{\tilde n_2-\tilde n_1
+K\left((\delta n_2)^2-(\delta n_1)^2\right)}{N}<0.
\label{eq:resonant_population_ordering}
\end{equation}
Theorem~\ref{thm:incoherent_ergotropy} then yields
$\mathcal E_{\rm inc}=0$. If $T_1\ne T_2$, $\delta n_1$ and $\delta n_2$
have the same nonzero sign, so Eq.~\eqref{eq:Sb} gives $S\ne0$ and
Eq.~\eqref{eq:rho23_complex} gives $c\ne0$.
Theorem~\ref{thm:coherence_activation} then implies
$\mathcal E=\mathcal E_{\rm coh}>0$. At equilibrium, the Gibbs
state is passive with vanishing ergotropy.
\end{proof}

At resonance, the NESS coherence $c\ne0$
requires retaining the cross-frequency dissipative terms beyond the full
secular approximation~\cite{Wang2019,Eastham2016}. Full secularization removes
these terms, yielding $c^{\rm sec}=0$ and a passive stationary state with
$\mathcal E^{\rm sec}=0$. Thus, the secular and nonsecular descriptions differ
qualitatively in the resonant case. A more
comprehensive comparison between the partial and full secular approximations, including the changes in stationary
populations, is presented in Sec.~\ref{sec:secular_comparison}.

\subsection{Temperature-bias response}
\label{subsec:resonant_bias}
Nonequilibrium environments have been shown to generate and enhance energy-basis
coherence in the NESS of three-level systems~\cite{LiCaiSun2015} and
coupled qubits~\cite{Wang2019}. These results suggest that increasing the temperature bias can enhance coherent ergotropy in the resonant case. Note that the coherent ergotropy given by Eq.~\eqref{eq:Ecoh_raw} depends on both the NESS coherence $c$ and the population imbalance $\Delta p$. We examine the activation and enhancement of ergotropy via the nonequilibrium environment by fixing the cold bath at $T_1=T_0$ and setting $T_2=T_0+\Delta T$, with
$\Delta T\ge0$.
Figure~\ref{fig:resonance}(a) shows that coherent ergotropy grows from zero
at equilibrium as the temperature bias $\Delta T$ increases for each fixed exchange
coupling $\lambda$. Its magnitude and growth depend on the exchange coupling $\lambda$, and
the relative ordering of the curves can change as the temperature bias increases.

Near equilibrium, the ergotropy activation can be obtained analytically. For fixed
nondegenerate $H_S$ and $T_0>0$, the coherence magnitude $|c|$ is
linear in the small positive temperature bias. With a finite passive population gap $\Delta p<0$ at
equilibrium, the small-coherence response in Eq.~\eqref{eq:fixed_population_response}
gives $\mathcal E_{\rm coh}\sim\Omega|c|^2/|\Delta p|$, yielding the quadratic law
\begin{equation}
\mathcal E_{\rm coh}=C_T(\Delta T)^2+O((\Delta T)^3).
\label{eq:temperature_bias_law}
\end{equation}
The positive coefficient $C_T$ is evaluated at fixed $H_S,J,T_0$ and has
units of inverse energy. Appendix~\ref{app:asymptotics} gives its
explicit expression. The inset of Fig.~\ref{fig:resonance}(a) shows the exact ratio
$\mathcal E_{\rm coh}/(\Delta T)^2$ in units of $10^{-2}/\bar\omega$.
The curve approaches $C_T$ as $\Delta T\to0^+$, consistent with
Eq.~\eqref{eq:temperature_bias_law}. At the inset parameters
$\lambda/\bar\omega=0.05$ and $T_0/\bar\omega=0.2$, the limiting value is
$\bar\omega C_T\simeq2.9529\times10^{-2}$.
This expansion assumes
$|\Delta T|/T_0\ll1$ at fixed exchange coupling. 

The fixed-cold-bath setup raises the mean temperature together with the temperature bias.
At resonance, swapping the bath temperatures $T_1$ and $T_2$ while keeping
all other parameters fixed leaves the NESS ergotropy unchanged:
$\mathcal E(T_1,T_2)=\mathcal E(T_2,T_1)$.
Reversing $\Delta T=T_2-T_1$ at fixed $T_1$ does not swap the bath pair,
and the ergotropy need not be even along this path. A cubic correction in
Eq.~\eqref{eq:temperature_bias_law} is therefore allowed.
The reported enhancement refers to the
finite interval of temperature bias in Fig.~\ref{fig:resonance}(a). The dependence of the temperature-bias enhancement on exchange coupling motivates the
analysis of exchange-coupling optimization in Sec.~\ref{subsec:resonant_exchange}.

\subsection{Exchange-coupling dependence and optimization}
\label{subsec:resonant_exchange}

At resonance $\Delta\omega=0$, varying the exchange coupling $\lambda$
affects the NESS coherence $c$ in two ways~\cite{Wang2019}.
Increasing the energy separation $\Omega=\lambda$ between $\ket2$ and
$\ket3$ suppresses $|c|$ when the bath occupations are held fixed, as shown in Eq.~\eqref{eq:rho23_complex}. Changing $\lambda$ also shifts the
transition frequencies $\omega'_{1,2}=\bar\omega\mp\lambda/2$ and thus
changes the bath occupations, which determine the coherence source $S$,
normalization $N$, and coherence linewidth $\gamma_{\rm eff}=2AJ$.
Analytically, we find that the coherent ergotropy grows linearly with sufficiently weak
exchange coupling at fixed temperature bias $\Delta T$. Setting
$\Delta\omega=0$ before taking $\lambda\to0^+$, with $J$ and
$\bar\omega$ fixed, gives
\begin{equation}
\mathcal E_{\rm coh}=
\frac{|n_1-n_2|}{2(1+2n_1)(1+2n_2)}\lambda+O(\lambda^2),
\label{eq:resonant_weak_exchange}
\end{equation}
where $n_j=n(\bar\omega,T_j)$. The derivation is given in Appendix~\ref{app:asymptotics}. The linear coefficient vanishes at equal bath temperatures, where the
NESS has zero ergotropy for every exchange coupling. At exactly
$\Delta\omega=\lambda=0$, the first and second excited states are degenerate, so coherence within
this degenerate subspace requires a separate treatment of energy dephasing.

At finite exchange coupling, coherent ergotropy becomes nonmonotonic. See Fig.~\ref{fig:resonance}(b). Equation~\eqref{eq:Ecoh_raw} shows that the ergotropy scale grows as
$\Omega=\lambda$, while both the stationary coherence $|c|$ and the
passive population gap $|\Delta p|$ vary with exchange coupling $\lambda$. Increasing the energy splitting $\Omega$ relative to the coherence linewidth
$\gamma_{\rm eff}=2AJ$ suppresses cross-frequency interference in
Eq.~\eqref{eq:rho23_complex}. Changing $\lambda$ also shifts both transition
frequencies, $\omega'_{1,2}=\bar\omega\mp\lambda/2$, and hence changes
the bath occupations, the source $S$, the normalization $N$, and the linewidth. The nonmonotonic ergotropy response results from these changes
together with the growing energy splitting.

We define the optimal exchange coupling $\lambda_{\rm opt}^{\rm coh}$ by
\begin{equation}
\mathcal E_{\rm coh}(\lambda_{\rm opt}^{\rm coh})
=\max_{0\le\lambda/\bar\omega<2}
\mathcal E_{\rm coh}(\lambda),
\label{eq:resonant_optimal_exchange}
\end{equation}
with $T_1,T_2,J,\bar\omega$ fixed and $\Delta\omega=0$. The maximization is restricted to positive semidefinite NESSs. The markers in Fig.~\ref{fig:resonance}(b) indicate
independently determined optimal couplings $\lambda_{\rm opt}^{\rm coh}$. Figure~\ref{fig:resonance}(c) shows the $(\lambda,\Delta T)$ plane, with a white dashed line marking the optimal coupling $\lambda_{\rm opt}^{\rm coh}$ at each temperature bias $\Delta T$. The optimal coupling $\lambda_{\rm opt}^{\rm coh}$ shifts toward larger exchange coupling as the temperature bias
increases. Figure~\ref{fig:resonance}(d) compares $\mathcal E_{\rm coh}$ evaluated at the optimal coupling
$\lambda_{\rm opt}^{\rm coh}$ with the fixed-coupling reference
$\lambda/\bar\omega=0.1$. The
optimized ergotropy grows with temperature bias and exceeds the reference at
nonzero temperature bias. Thus attaining the maximum coherent ergotropy requires
adjusting the exchange coupling with the temperature bias.

\section{Finite detuning: population activation and work composition}
\label{sec:detuning}

In this section, we first show that finite detuning with $\Delta\omega\neq 0$ can activate incoherent ergotropy and present an exact activation criterion in Sec.~\ref{subsec:population_activation}. We then discuss the competition between incoherent and coherent ergotropy in Sec.~\ref{subsec:detuning_exchange}. Finally, we study exchange-coupling optimization of ergotropy in Sec.~\ref{subsec:population_optimization}.

\begin{figure*}[t]
\centering
\includegraphics[width=0.99\textwidth]{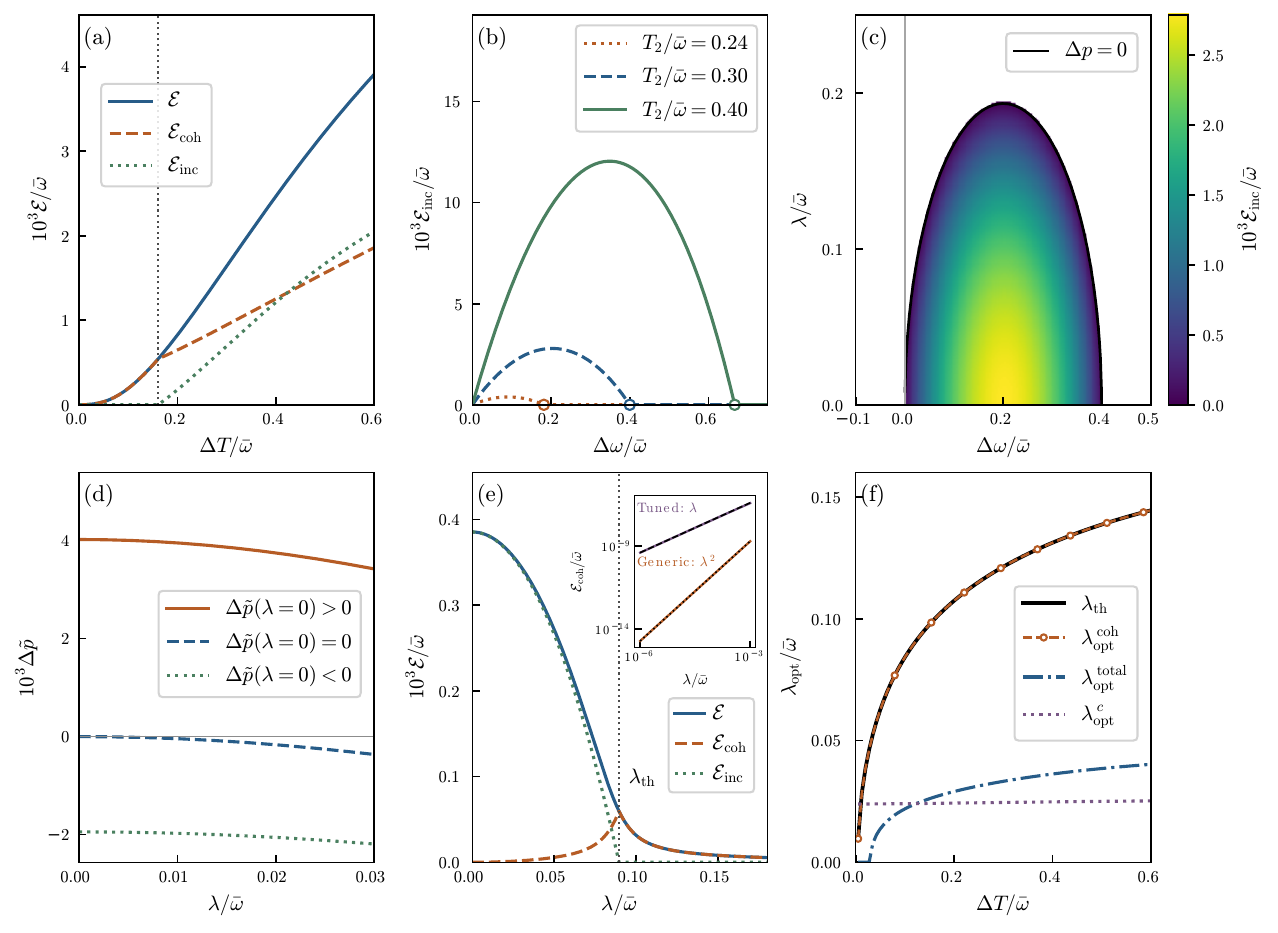}
\caption{Finite-detuning activation, composition, and optimization of
NESS ergotropy. All panels use $J/\bar\omega=0.01$ and
$T_1/\bar\omega=0.2$, with $T_2=T_1+\Delta T$.
(a) Total (blue solid), coherent (orange dashed), and incoherent (green dotted)
ergotropy versus temperature bias $\Delta T$ at $\Delta\omega/\bar\omega=0.02$ and
$\lambda/\bar\omega=0.1$. The gray vertical dotted line marks the onset of
incoherent ergotropy.
(b) Incoherent ergotropy of two uncoupled qubits ($\lambda=0$) versus detuning $\Delta\omega$. Open circles mark the activation endpoints
where $n(\omega_2,T_2)=n(\omega_1,T_1)$.
(c) Incoherent-ergotropy map at $T_2/\bar\omega=0.3$. The black solid curve is the exact
$\Delta p=0$ activation boundary. 
(d) Population imbalance $\Delta\tilde p=N\Delta p$ versus exchange coupling at
$\Delta\omega/\bar\omega=0.1$. Solid, dashed, and dotted curves correspond
to $\Delta\tilde p(\lambda=0)>0$, $=0$, and $<0$, respectively, with
$T_2/\bar\omega=0.24$, $0.2(1.05/0.95)$, and $0.21$.
(e) Ergotropy composition for the initially active case of (d), with
$T_2/\bar\omega=0.24$. The gray vertical dotted line marks
$\lambda_\text{th}$, where incoherent ergotropy vanishes and coherent ergotropy is maximal. The inset compares the generic quadratic
($\Delta\tilde p(\lambda=0)\ne0$) and tuned linear
($\Delta\tilde p(\lambda=0)=0$) coherent-ergotropy responses at
$T_2/\bar\omega=0.24$ and $0.2(1.05/0.95)$, respectively.
Black dotted and dashed lines show the corresponding leading-order laws
in Eqs.~\eqref{eq:detuned_generic_condition} and
\eqref{eq:detuned_tuned_coherent_law}.
(f) Incoherent-ergotropy threshold $\lambda_\text{th}$ (black solid) and the
independently determined optimal couplings
$\lambda_{\rm opt}^{\rm coh}$ (orange dashed with open circles),
$\lambda_{\rm opt}^{\rm total}$ (blue dash-dotted), and
$\lambda_{\rm opt}^{c}$ (purple dotted), defined in
Eq.~\eqref{eq:distinct_exchange_optimizers}, versus temperature bias at
$\Delta\omega/\bar\omega=0.02$. }
\label{fig:detuning}
\end{figure*}

\subsection{Incoherent ergotropy activation criterion}
\label{subsec:population_activation}

In contrast to the resonant case $\Delta\omega=0$, finite detuning allows the single-excitation populations to reverse their ordering. The population imbalance $\Delta p$ can therefore become positive, activating incoherent ergotropy. Figure~\ref{fig:detuning}(a) shows the numerically evaluated coherent,
incoherent, and total ergotropy in terms of the temperature bias $\Delta T$. Below a threshold, ergotropy is entirely coherent. The analytical NESS presented in Sec.~\ref{subsec:compact_param} allows us to determine the exact activation threshold for incoherent ergotropy.

With symmetric flat reservoir spectra $J_1=J_2=J$, the ergotropy has the symmetry 
\begin{equation}
\mathcal E_\alpha(\Delta\omega,T_1,T_2)
=\mathcal E_\alpha(-\Delta\omega,T_2,T_1),
\label{eq:bias_detuning_symmetry}
\end{equation}
for $\alpha\in\{{\rm coh},{\rm inc},{\rm total}\}$. We therefore take
$T_2>T_1>0$ without loss of generality. Note that the sign of $\Delta\omega$ specifies which reservoir is attached to the higher-frequency bare qubit.
At the transition frequencies $\omega'_1<\omega'_2$ in
Eq.~\eqref{eq:transition_frequencies}, define the mean occupation as
\begin{equation}
\bar n_a=\frac{n(\omega'_a,T_1)+n(\omega'_a,T_2)}{2},
\end{equation}
with $a=1,2$. Since $T_2>T_1$, the bath-contrast terms $\delta n_a$
defined in Eq.~\eqref{eq:delta_n_main} are strictly negative for
$\lambda>0$, so $-(\delta n_1+\delta n_2)>0$.
For nonzero exchange and finite nonzero detuning, the exact activation
criterion for incoherent ergotropy is as follows.

\Needspace{9\baselineskip}
\begin{theorem}
\label{thm:inc-activation}
Consider any physical NESS with nonzero exchange $\lambda>0$. The incoherent ergotropy is activated if and only if
\begin{equation}
\cot\theta>
-\frac{\bar n_1-\bar n_2}{\delta n_1+\delta n_2}
+K(\delta n_2-\delta n_1).
\label{eq:exact-inc-activation}
\end{equation}
\end{theorem}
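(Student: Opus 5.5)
By Theorem~\ref{thm:incoherent_ergotropy}, $\mathcal E_{\rm inc}=\Omega\max(0,\Delta p)$ with $\Omega>0$, so incoherent ergotropy is activated if and only if $\Delta p>0$. For a physical NESS we have $N>0$, so by Eq.~\eqref{eq:Delta_p_analytic} the condition is equivalent to positivity of the numerator,
\begin{equation*}
\tilde n_2-\tilde n_1+K\bigl((\delta n_2)^2-(\delta n_1)^2\bigr)>0 .
\end{equation*}
The plan is to rewrite this inequality in terms of $\bar n_a$, $\delta n_a$ and $\cot\theta$, and then divide by a quantity of known sign.

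\textbf{Step 1: rewrite the effective occupations.} Use $\cos^2(\theta/2)=\tfrac12(1+\cos\theta)$ and $\sin^2(\theta/2)=\tfrac12(1-\cos\theta)$ in Eqs.~\eqref{eq:ntilde1} and~\eqref{eq:ntilde2}. This gives
\begin{equation*}
\tilde n_1=\bar n_1+\tfrac{\cos\theta}{2}\bigl(n_1(\omega'_1)-n_2(\omega'_1)\bigr),\qquad
\tilde n_2=\bar n_2-\tfrac{\cos\theta}{2}\bigl(n_1(\omega'_2)-n_2(\omega'_2)\bigr).
\end{equation*}
Since $\lambda>0$, we have $\sin\theta=\lambda/\Omega>0$. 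Equation~\eqref{eq:delta_n_main} then gives $\tfrac{\cos\theta}{2}(n_1-n_2)(\omega'_a)=\cot\theta\,\delta n_a$. Hence $\tilde n_1=\bar n_1+\cot\theta\,\delta n_1$ and $\tilde n_2=\bar n_2-\cot\theta\,\delta n_2$, which yields
\begin{equation*}
\tilde n_2-\tilde n_1=\bar n_2-\bar n_1-\cot\theta\,(\delta n_1+\delta n_2).
\end{equation*}

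\textbf{Step 2: factor and divide.} Factor $(\delta n_2)^2-(\delta n_1)^2=(\delta n_2-\delta n_1)(\delta n_1+\delta n_2)$. The activation condition then reads
\begin{equation*}
\bar n_2-\bar n_1-(\delta n_1+\delta n_2)\bigl[\cot\theta-K(\delta n_2-\delta n_1)\bigr]>0 .
\end{equation*}
With the convention $T_2>T_1$, justified by the symmetry~\eqref{eq:bias_detuning_symmetry}, and with $\lambda>0$, both $\delta n_a$ are strictly negative, so $-(\delta n_1+\delta n_2)>0$. Dividing by this positive quantity preserves the inequality and gives
\begin{equation*}
\cot\theta>\frac{\bar n_2-\bar n_1}{\delta n_1+\delta n_2}+K(\delta n_2-\delta n_1),
\end{equation*}
which is exactly Eq.~\eqref{eq:exact-inc-activation}. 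Every step is an equivalence, so the criterion is both necessary and sufficient.

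\textbf{Main obstacle.} There is no real analytic obstacle; the argument is essentially algebraic. The points needing care are:
\begin{itemize}
\item Justifying $N>0$ and the use of Theorem~\ref{thm:incoherent_ergotropy}. Both are part of the standing "physical NESS" assumption and the ordering proved in Appendix~\ref{app:automatic_ordering}.
\item Tracking the sign of $\delta n_1+\delta n_2$. It is strictly negative because $n(\omega,T)$ is strictly increasing in $T$ and $\sin\theta>0$. This sign fixes the direction of the inequality.
\item Recording that $K$, defined in Eq.~\eqref{eq:K_definition}, is a fixed positive number that depends on the parameters but not on the sign manipulation.
\end{itemize}
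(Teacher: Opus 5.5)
Your proof is correct and follows the paper's own route. You reduce activation to positivity of $N\Delta p$, rewrite $\tilde n_2-\tilde n_1=\bar n_2-\bar n_1-\cot\theta\,(\delta n_1+\delta n_2)$, factor the $K$-term, and divide by $-(\delta n_1+\delta n_2)>0$ under the standing convention $T_2>T_1$; your Step~1 just makes explicit the substitution the paper states as an exact identity.
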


\begin{proof}
Theorem~\ref{thm:incoherent_ergotropy} gives
$\mathcal E_{\rm inc}=\Omega\max(0,\Delta p)$. Since $N>0$,
activation is equivalent to $\Delta\tilde p=N\Delta p>0$. Substituting
Eqs.~\eqref{eq:ntilde1}--\eqref{eq:delta_n_main} into
Eq.~\eqref{eq:Delta_p_analytic} yields the exact identity
\begin{multline}
\Delta\tilde p=-(\bar n_1-\bar n_2)
-\cot\theta(\delta n_1+\delta n_2)\\
-K((\delta n_1)^2-(\delta n_2)^2).
\label{eq:unnormalized-imbalance-direct}
\end{multline}
Factorizing $(\delta n_1)^2-(\delta n_2)^2
=(\delta n_1+\delta n_2)(\delta n_1-\delta n_2)$ and dividing
$\Delta\tilde p>0$ by $-(\delta n_1+\delta n_2)>0$ gives precisely
Eq.~\eqref{eq:exact-inc-activation}.
\end{proof}

The criterion in Eq.~\eqref{eq:exact-inc-activation} is determined by
the Hamiltonian parameters $\bar\omega$, $\Delta\omega$, and $\lambda$, and the environmental parameters $T_1,T_2$, and $J$.
The criterion is exact within the partial-secular Bloch--Redfield
description. Note that the second term on the right of Eq.~\eqref{eq:exact-inc-activation} is the nonsecular population correction from
Eq.~\eqref{eq:Delta_p_analytic}. Incoherent work
appears only when the ratio of detuning to exchange strength,
$\cot\theta=\Delta\omega/\lambda$, exceeds the threshold. Physically, it also imposes a directional relation between detuning $\Delta\omega$ and temperature bias $\Delta T$, as stated in the following corollary.

\begin{corollary}
\label{cor:activation_direction}
Incoherent ergotropy is activated only when the hotter reservoir is attached to the
higher-frequency bare qubit, i.e., $\Delta\omega\Delta T>0$.
\end{corollary}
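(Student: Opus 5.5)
The plan is to prove the contrapositive: if $\Delta\omega\Delta T\le0$, then $\Delta\tilde p=N\Delta p\le0$. We use the symmetry in Eq.~\eqref{eq:bias_detuning_symmetry} to fix $T_2>T_1$, i.e.\ $\Delta T>0$. It then suffices to show that $\Delta\omega\le0$, equivalently $\cot\theta\le0$, forces $\Delta\tilde p<0$. The case $\Delta T=0$ gives the Gibbs state and needs no argument. The case $\lambda=0$ gives a product of local thermal states with $\Delta\omega<0$. For it, a direct comparison shows $p_3\propto n(\omega_2,T_2)(1+n(\omega_1,T_1))$ against $p_2$. Here $\omega_2<\omega_1$ and $T_2>T_1$ give $n(\omega_2,T_2)>n(\omega_1,T_1)$, so $p_3$ exceeds $p_2$. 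We must check which labelled level has the higher energy. For $\lambda=0$ and $\theta=\pi$, $\ket3=\ket{eg}$ with energy $\omega_1>\omega_2$. Thus $p_3$ here belongs to qubit~1 excited, and the imbalance is $\propto n(\omega_1,T_1)-n(\omega_2,T_2)<0$, consistent with the activation endpoints of Fig.~\ref{fig:detuning}(b). The main effort therefore goes to $\lambda>0$.

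For $\lambda>0$ we start from the exact identity in Eq.~\eqref{eq:unnormalized-imbalance-direct}:
\[
\Delta\tilde p=-(\bar n_1-\bar n_2)-\cot\theta(\delta n_1+\delta n_2)-K(\delta n_1+\delta n_2)(\delta n_1-\delta n_2).
\]
Each piece has a sign we can control.
\begin{itemize}
\item Since $\omega'_1<\omega'_2$ and Bose occupations decrease in frequency, $\bar n_1>\bar n_2$, so the first term is strictly negative.
\item With $T_2>T_1$ we have $\delta n_1+\delta n_2<0$. Hence for $\cot\theta\le0$ the second term is $\le0$.
\item For the third term we use the monotonicity of $n(\omega,T_{\rm h})-n(\omega,T_{\rm c})$ from Eq.~\eqref{eq:bose_difference_monotonicity}. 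It gives $|n_1(\omega'_1)-n_2(\omega'_1)|>|n_1(\omega'_2)-n_2(\omega'_2)|$, so $|\delta n_1|>|\delta n_2|$ with both negative. Therefore $(\delta n_1)^2-(\delta n_2)^2>0$, and with $K>0$ the third term is negative.
\end{itemize}
All three contributions are nonpositive and the first is strictly negative, so $\Delta\tilde p<0$. Theorem~\ref{thm:incoherent_ergotropy} then gives $\mathcal E_{\rm inc}=0$.

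Equivalently, in the language of Theorem~\ref{thm:inc-activation}, the right-hand side of Eq.~\eqref{eq:exact-inc-activation} is strictly positive. Activation then requires $\cot\theta>0$, i.e.\ $\Delta\omega>0=\,$sign of $\Delta T$. Undoing the symmetry reduction gives $\Delta\omega\Delta T>0$ in general.

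The main obstacle is the sign of the third, nonsecular correction term. It relies on the strict monotonic decrease of the Bose-difference function. That is stated in Eq.~\eqref{eq:bose_difference_monotonicity} but deserves a quick check. Write $f(\omega)=n(\omega,T_{\rm h})-n(\omega,T_{\rm c})$ and $g(x)=x e^{x}/(e^x-1)^2$. Then
\[
\omega f'(\omega)=-g(\omega/T_{\rm h})+g(\omega/T_{\rm c}),
\]
and $g$ is decreasing for $x>0$, which makes $f'<0$. The other delicate point is the bookkeeping of level labels under the sign of $\Delta\omega$. This only matters in the $\lambda=0$ limit, where $\theta\in\{0,\pi\}$ and the identity above does not apply directly.
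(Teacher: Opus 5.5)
Your proof is correct and follows the paper's argument. The paper likewise combines the exact identity for $\Delta\tilde p$, the ordering $\bar n_1>\bar n_2$, and the strict frequency monotonicity of the Bose-occupation difference (giving $\delta n_1<\delta n_2<0$) to show that the right-hand side of Eq.~\eqref{eq:exact-inc-activation} is strictly positive, which is your term-by-term sign argument for $\cot\theta\le0$ in different words. Your explicit treatment of $\lambda=0$ and $\Delta T=0$ is a harmless addition, but in the $\lambda=0$ paragraph you should delete the opening assignment $p_3\propto n(\omega_2,T_2)(1+n(\omega_1,T_1))$: for $\theta=\pi$ one has $\ket3=\ket{eg}$, so $p_3\propto n(\omega_1,T_1)(1+n(\omega_2,T_2))$, as you then correctly conclude.
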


\begin{proof}
The Bose occupation $n(\omega,T)$ decreases strictly with frequency at fixed
temperature, so $\bar n_1>\bar n_2$.
The magnitude of its frequency derivative satisfies
\begin{equation}
-\partial_\omega n(\omega,T)
=\frac{1}{4T\sinh^2(\omega/(2T))},
\label{eq:bose_frequency_slope}
\end{equation}
and therefore increases with $T$ at fixed $\omega>0$: its logarithmic
temperature derivative is
$(x\coth(x/2)-1)/T>0$, with $x=\omega/T>0$.
Hence the positive occupation difference $n(\omega,T_2)-n(\omega,T_1)$
decreases strictly with frequency, and Eq.~\eqref{eq:delta_n_main} gives
$\delta n_1<\delta n_2<0$. Together with $\bar n_1>\bar n_2$ and $K>0$,
this makes the right-hand side of Eq.~\eqref{eq:exact-inc-activation}
strictly positive. Activation therefore requires
$\cot\theta=\Delta\omega/\lambda>0$, which implies $\Delta\omega>0$.
\end{proof}

For $\lambda=0$, $\Delta\omega>0$, and $T_2>T_1>0$, the NESS is the
product of the two local thermal states, with $\theta=0$,
$\Omega=\Delta\omega$, and vanishing coherence $c=0$. The effective occupations reduce to
$\tilde n_j=n(\omega_j,T_j)$, while $\delta n_1=\delta n_2=0$.
Equation~\eqref{eq:Delta_p_analytic} therefore gives
\begin{equation}
\Delta\tilde p(\lambda=0)=n(\omega_2,T_2)-n(\omega_1,T_1).
\label{eq:tilde_p_lambda_0}
\end{equation}
Incoherent ergotropy
is activated if and only if $n(\omega_2,T_2)>n(\omega_1,T_1)$. Since the Bose occupation decreases strictly with $\omega/T$, this is
equivalent to $T_2/T_1>\omega_2/\omega_1$.
The product-state population imbalance is
\begin{equation}
\Delta p(\lambda=0)=\frac12\left(
\tanh\frac{\omega_1}{2T_1}-\tanh\frac{\omega_2}{2T_2}\right).
\label{eq:uncoupled_population_imbalance}
\end{equation}
Using $\mathcal E_{\rm inc}=\Delta\omega\max(0,\Delta p)$ gives the
exact incoherent ergotropy shown in Fig.~\ref{fig:detuning}(b):
\begin{equation}
\mathcal E_{\rm inc}(\lambda=0)=\max\left(\frac{\Delta\omega}{2}\left(
\tanh\frac{\omega_1}{2T_1}-\tanh\frac{\omega_2}{2T_2}\right),0\right).
\label{eq:uncoupled_population_work}
\end{equation}
At fixed $T_1$ and bare frequencies $\omega_j$, increasing the temperature bias
$\Delta T$ promotes the activation of incoherent ergotropy in the uncoupled
limit and increases its magnitude once activated.

The directional restriction concerns incoherent ergotropy only. The opposite detuning orientation can still support coherent ergotropy because
the NESS coherence need not vanish. Figure~\ref{fig:detuning}(c) illustrates
the directional incoherent ergotropy activation region and its exact $\Delta p=0$ boundary.

\subsection{Competition between incoherent and coherent ergotropy}
\label{subsec:detuning_exchange}

Finite detuning $\Delta\omega>0$ allows both incoherent and coherent ergotropy to be activated. As the numerical results in Fig.~\ref{fig:detuning}(c) show, increasing the exchange coupling $\lambda$ decreases incoherent ergotropy. Recall that the incoherent ergotropy is activated when the population imbalance $\Delta p$ becomes positive, while the coherent ergotropy is activated when the coherence $|c|$ becomes nonzero. The competition between these two contributions can lead to nontrivial behavior in the total ergotropy as a function of the exchange coupling $\lambda$.

Fix $\Delta\omega>0$ and $T_2>T_1$, with $\lambda/\Delta\omega\ll1$.
Using the uncoupled population imbalance in Eq.~\eqref{eq:tilde_p_lambda_0},
the exact NESS solution gives
\begin{equation}
\label{eq:weak_exchange_population_order}
\Delta\tilde p(\lambda)=\Delta\tilde p(\lambda=0)-C_{\Delta\tilde p}\lambda^2+O(\lambda^4),
\end{equation}
where the coefficient $C_{\Delta\tilde p}>0$ is derived in Appendix~\ref{app:finite_detuning_analysis}. Weak exchange therefore decreases $\Delta\tilde p$ to
quadratic order and pushes population ordering toward the passive side. See Fig.~\ref{fig:detuning}(d) for a numerical illustration.  

The expansion in Eq.~\eqref{eq:weak_exchange_population_order} separates three regimes. If \mbox{$\Delta\tilde p(\lambda=0)>0$}, the uncoupled
state is population active. The population-work threshold $\lambda_\text{th}$ is defined by $\Delta p(\lambda_\text{th})=0$ and can be estimated in the weak-exchange regime as
\begin{equation}
\lambda_\text{th}\simeq\sqrt{\frac{\Delta\tilde p(\lambda=0)}{C_{\Delta\tilde p}}},
\end{equation}
below which the incoherent ergotropy is activated. If $\Delta\tilde p(\lambda=0)<0$, sufficiently weak exchange coupling cannot activate incoherent ergotropy. For either case with $\Delta\tilde p(\lambda=0)\ne0$, we find that coherent ergotropy responds quadratically to the exchange coupling:
\begin{equation}
\mathcal E_{\rm coh}=C_{\Delta \tilde p\neq 0}\lambda^2+O(\lambda^4),
\label{eq:detuned_generic_condition}
\end{equation}
with the coefficient given in Appendix~\ref{app:finite_detuning_analysis}.

At the tuned balance $\Delta\tilde p(\lambda=0)=0$, equivalently
$T_2/T_1=\omega_2/\omega_1$, the expansion in Eq.~\eqref{eq:weak_exchange_population_order}
instead gives $\Delta\tilde p<0$ for sufficiently small nonzero exchange. Thus
$\mathcal E_{\rm inc}=0$ in an interval near $\lambda=0$, while
$\Delta p=O(\lambda^2)$ and $|c|=O(\lambda)$ yield a linear response
\begin{equation}
\mathcal E_{\rm coh}=C_{\Delta \tilde p= 0}\lambda+O(\lambda^2),
\label{eq:detuned_tuned_coherent_law}
\end{equation}
where the coefficient $C_{\Delta \tilde p= 0}>0$ is derived in Appendix~\ref{app:finite_detuning_analysis}. 
At the balance point $\Delta \tilde p= 0$, the two eigenvalues of the single-excitation subspace of the uncoupled density matrix coincide, so the coherence $|c|=O(\lambda)$ splits them to first order, whereas $\Delta p=O(\lambda^2)$ contributes only a subleading correction.

Figure~\ref{fig:detuning}(e) follows the population-active case of
panel (d). Increasing exchange reduces incoherent work and generates
coherent work, which reaches its maximum at $\lambda_\text{th}$.
Beyond this threshold the work is entirely coherent, while total work
decreases along this coupling scan. The inset verifies the generic
quadratic and tuned linear weak-exchange laws in
Eqs.~\eqref{eq:detuned_generic_condition} and
\eqref{eq:detuned_tuned_coherent_law}.

\subsection{Exchange-coupling optimization}
\label{subsec:population_optimization}

At resonance $\Delta\omega=0$, Sec.~\ref{subsec:resonant_exchange} shows that coherent and
total ergotropy share an optimal exchange coupling, which shifts toward
larger values as the temperature bias increases. This optimum balances
the growing energy splitting against the suppression of steady-state
coherence. Finite detuning introduces incoherent ergotropy, so changing
exchange also changes the population contribution to the available work.
Maximizing coherence, coherent ergotropy, and total ergotropy therefore
requires distinguishing three optimal couplings:
\begin{subequations}
\label{eq:distinct_exchange_optimizers}
\begin{align}
&\lambda_{\rm opt}^{c}=\arg\max_\lambda|c(\lambda)|,\\
&\lambda_{\rm opt}^{\rm coh}=\arg\max_\lambda\mathcal E_{\rm coh}(\lambda),\\
&\lambda_{\rm opt}^{\rm total}=\arg\max_\lambda\mathcal E(\lambda).
\end{align}
\end{subequations}
The optimizations are performed with $\Delta\omega,T_1,T_2,J,$ and $\bar\omega$ fixed. 

Figure~\ref{fig:detuning}(f) compares the three optimal couplings as the
positive temperature bias increases. For the parameters shown, the coherent-ergotropy optimum coincides
with $\lambda_\text{th}$, and both
move toward larger exchange. Note that $\lambda_\text{th}$ is the exchange-coupling threshold for activation of incoherent ergotropy given by $\Delta p(\lambda_\text{th})=0$. Population balance removes the population mismatch $\Delta p$ that reduces coherent ergotropy in Eq.~\eqref{eq:Ecoh_raw}, so coherence alone contributes to ergotropy. The population-mismatch contribution changes the coherent-ergotropy slope from positive to negative at the crossing, placing the maximum at the threshold. 

The coherence optimum $\lambda_{\rm opt}^{c}$ varies only weakly with $\Delta T$ and appears nearly constant. Its coupling scale is set mainly by the competition between eigenstate mixing, which creates the coherence source, and the growing splitting $\Omega$ relative to the dissipative linewidth $\gamma_\text{eff} = 2AJ$, which suppresses cross-frequency interference. Increasing the temperature bias $\Delta T$ strengthens the coherence source while shifting this optimum only slightly. Consequently, maximizing the coherence amplitude does not locate the coupling that maximizes coherent ergotropy, which also depends on the energy splitting $\Omega$ and population imbalance $\Delta p$.

The total-ergotropy optimum has a different dependence on the temperature bias $\Delta T$. Within the
incoherent-ergotropy-active range, weaker temperature bias favors the uncoupled
limit $\lambda=0$: the coherent ergotropy generated by exchange cannot compensate for the
loss of incoherent ergotropy. Only above a finite temperature bias crossover does a
nonzero optimal exchange coupling emerge, after which it increases with the temperature bias.
This optimum of total ergotropy lies below $\lambda_\text{th}$ and retains an incoherent-ergotropy
contribution, balancing coherent ergotropy against the exchange-induced
changes in the populations and energy splitting. Thus a nonzero exchange
improves total ergotropy only in part of the temperature bias range, even though
it generates coherent ergotropy throughout.

\section{Comparison with the secular approximation}
\label{sec:secular_comparison}

\begin{figure*}[t]
\centering
\includegraphics[width=0.98\textwidth]{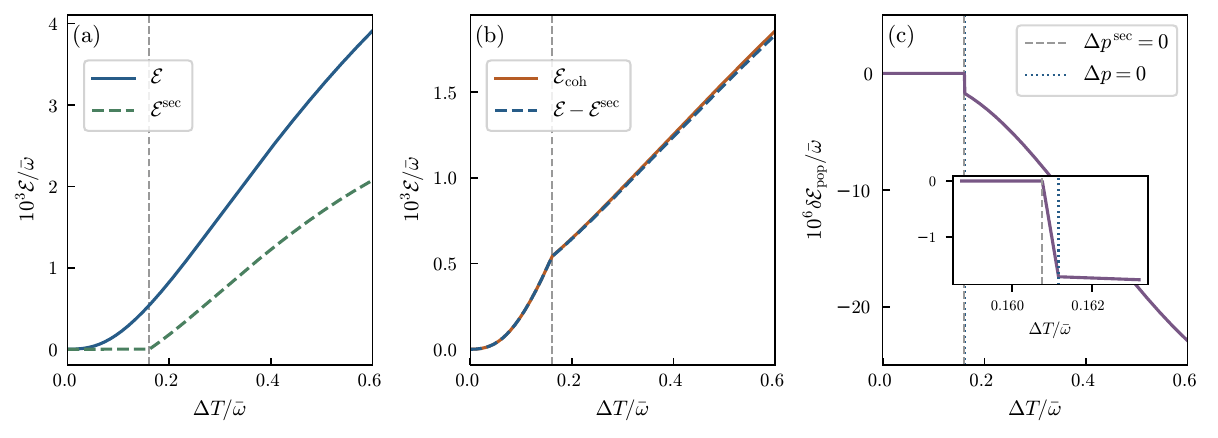}
\caption{Comparison of NESS ergotropy with and without the full secular approximation.
All panels use fixed $T_1/\bar\omega=0.2$ and $T_2=T_1+\Delta T$,
$\Delta\omega/\bar\omega=0.02$, $\lambda/\bar\omega=0.1$, and
$J/\bar\omega=0.01$.
(a) Total ergotropy $\mathcal E$ of the partially secular NESS (blue solid)
and $\mathcal E^{\rm sec}$ of the fully secular NESS (green dashed).
(b) Coherent ergotropy $\mathcal E_{\rm coh}$ (orange solid) and the total
ergotropy difference $\mathcal E-\mathcal E^{\rm sec}$ (blue dashed).
(c) Population correction $\delta\mathcal E_{\rm pop}=
\mathcal E_{\rm inc}-\mathcal E^{\rm sec}$ (purple solid). In all panels, gray dashed and blue dotted vertical lines mark
$\Delta p^{\rm sec}=0$ and $\Delta p=0$, respectively.}
\label{fig:secular_comparison}
\end{figure*}

In this section, we consider the NESS ergotropy obtained from the master equation with the full secular approximation~\cite{Davies1974,Breuer2002}. The full secular approximation removes the same-sign cross-frequency terms in the dissipator of Eq.~\eqref{eq:D0_Ds_split}. Specifically, the dissipator $\mathcal D_s$ including the same-sign cross-frequency terms is discarded. Consequently, the NESS in the energy basis is diagonal with vanishing coherence $c=0$. In the following, we use the superscript ``sec'' to denote the stationary state obtained under the full secular approximation. 

The outer population ordering, established in
Appendix~\ref{app:automatic_ordering}, also applies to the secular stationary state. Therefore, the ergotropy is entirely incoherent. Concretely, it is determined by
\begin{equation}
\mathcal E^{\rm sec}=\Omega\max(0,\Delta p^{\rm sec}),
\label{eq:E_sec_closed}
\end{equation}
where the population imbalance $\Delta p^{\rm sec}$ is
\begin{equation}
\Delta p^{\rm sec}=\frac{\tilde n_2-\tilde n_1}{N^{\rm sec}}.
\label{eq:secular_imbalance}
\end{equation}
Here the effective occupations $\tilde n_j$ are defined in Eqs. \eqref{eq:ntilde1} and \eqref{eq:ntilde2}, and the normalization factor $N^{\rm sec}$ is
\begin{equation}
N^{\rm sec}=(1+2\tilde n_1)(1+2\tilde n_2),
\label{eq:N_sec_definition}
\end{equation}
which can be obtained by removing the second term in $N$ defined in Eq.~\eqref{eq:N_definition}.

At resonance $\Delta\omega=0$, the lower transition has the larger effective occupation, so $\Delta p^{\rm sec}<0$ and the NESS state with the full secular approximation is always passive. In the detuned case $\Delta\omega\neq0$, the full secular approximation can activate incoherent ergotropy, but it is not equivalent to simply removing the coherent ergotropy of the NESS without the full secular approximation. This is because the populations are also affected by the secular approximation and $\mathcal D_H(\rho^{\rm ss})\ne\rho^{\rm sec}$ in general. Consequently, we have the difference
\begin{equation}
\mathcal E-\mathcal E^{\rm sec}=\mathcal E_{\rm coh}+\delta\mathcal E_{\rm pop},
\label{eq:accounting_identity}
\end{equation}
where $\delta\mathcal E_{\rm pop}$ quantifies the population change induced by the secular approximation. It also equals
\begin{equation}
\delta\mathcal E_{\rm pop}=\mathcal E_{\rm inc}-\mathcal E^{\rm sec},
\label{eq:population_backaction}
\end{equation}
due to $\mathcal E = \mathcal E_{\rm coh} + \mathcal E_{\rm inc}$.

Figure~\ref{fig:secular_comparison}(a) compares the total ergotropy with and
without full secularization based on the results in Fig.~\ref{fig:detuning}(a). As expected, both predictions vanish at equilibrium.
As the temperature bias $\Delta T$ increases, the partially secular ergotropy grows
continuously from zero, whereas the fully secular state remains passive
until a finite bias activates incoherent ergotropy. Both ergotropies increase, but the partially secular prediction remains larger and their difference widens as the temperature bias increases.

Figure~\ref{fig:secular_comparison}(b) shows that the total difference
$\mathcal E-\mathcal E^{\rm sec}$ closely follows
$\mathcal E_{\rm coh}$ as the temperature bias increases.
The two coincide while the populations of both stationary states retain passive order.
Once population ergotropy appears, the total difference lies slightly below
the coherent contribution. Panel (c) resolves this separation:
$\delta\mathcal E_{\rm pop}$ becomes negative and its magnitude grows with
bias, although it remains small compared with the coherent contribution.
The inset also shows that population ergotropy is activated slightly earlier
in the fully secular state than in the partially secular state. The discrepancy arises because the cross-frequency terms feed coherence back into the
equation of motion through Eq.~\eqref{eq:app_six_equations}, weakening
population inversion and delaying incoherent-ergotropy activation, so the
negative population correction partly offsets the coherent contribution.

\section{Conclusion}
\label{sec:conclusion}

In this work, we have investigated how temperature bias $\Delta T$, qubit detuning $\Delta\omega$, and the exchange coupling $\lambda$ determine the ergotropy of the NESS obtained without the full secular approximation. We proved that the NESS of two resonant qubits with $\Delta\omega=0$ has only coherent ergotropy (Theorem~\ref{thm:resonant_coherence_only}), which grows quadratically with temperature bias near equilibrium at fixed exchange coupling (Eq.~\eqref{eq:temperature_bias_law}) and linearly with exchange coupling in the weak-exchange limit at fixed nonzero temperature bias (Eq.~\eqref{eq:resonant_weak_exchange}). Furthermore, we proved that the NESS of two detuned qubits with $\Delta\omega\neq0$ can have both coherent and incoherent ergotropy (Theorem~\ref{thm:inc-activation}). The incoherent ergotropy is activated only when the hotter reservoir is attached to the higher-frequency bare qubit (Corollary~\ref{cor:activation_direction}), and its exact activation criterion is given in Eq.~\eqref{eq:exact-inc-activation}. Finally, we demonstrated that the ergotropy obtained from the full secular approximation is both qualitatively and quantitatively different from that obtained without the full secular approximation. The ergotropy obtained from the full secular approximation is generally smaller than that obtained without the full secular approximation, and the difference contains both coherent ergotropy and a population correction (Eq.~\eqref{eq:accounting_identity}).

A natural extension of our work is to consider the NESS ergotropy of multi-qubit systems without the full secular approximation. The NESS ergotropy can also be studied beyond the Born--Markov approximation, allowing for strong system--environment interactions~\cite{Leggett1987,DubeStamp1998}. Our work also demonstrates that steady-state coherence plays a key role in ergotropy activation. It would therefore be interesting to explore whether other mechanisms, such as combined longitudinal and transverse qubit--phonon couplings~\cite{Guarnieri2018,Purkayastha2020} or Fano interference induced by polarized incoherent light~\cite{Koyu2021}, can generate steady-state coherence and sustain ergotropy in open quantum systems, and how to optimize ergotropy in these systems. Finally, the physical setting of our work is suitable for experimental tests on current quantum platforms, such as superconducting qubits~\cite{Majer2007,Ronzani2018,PhysRevA.107.023725} and trapped ions~\cite{Barreiro2011}. It would be interesting to experimentally demonstrate the qualitative and quantitative differences between the NESS ergotropy obtained with and without the full secular approximation.

\begin{acknowledgments}
This work was supported by the NSFC (Grant Nos.~12275215 and 12305028) and the Youth Innovation Team of Shaanxi Universities. KZ is supported by the China Postdoctoral Science Foundation under Grant No.~2025M773421, Shaanxi Province Postdoctoral Science Foundation under Grant No.~2025BSHYDZZ017, and Scientific Research Program Funded by Education Department of Shaanxi Provincial Government (Program No.~24JP186).
\end{acknowledgments}
\appendix
\section{Complete analytical stationary solution}
\label{app:br_steady_state}

We construct the NESS analytically following
Refs.~\cite{Wang2019,Zhang2020}. Let $x=\operatorname{Re}c$ and
$y=\operatorname{Im}c$. Substituting the transition operators in
Eq.~\eqref{eq:jump_operators} into Eq.~\eqref{eq:BR_master_equation} gives the
coupled differential equations
\begin{widetext}
\begin{subequations}
\label{eq:app_six_equations}
\begin{align}
&\dot p_1={}2J\bigl(-(\tilde n_1+\tilde n_2)p_1+(1+\tilde n_1)p_2+(1+\tilde n_2)p_3+(\delta n_1+\delta n_2)x\bigr),\\
&\dot p_2={}2J\bigl(\tilde n_1p_1-(1+\tilde n_1+\tilde n_2)p_2+(1+\tilde n_2)p_4+(\delta n_1-\delta n_2)x\bigr),\\
&\dot p_3={}2J\bigl(\tilde n_2p_1-(1+\tilde n_1+\tilde n_2)p_3+(1+\tilde n_1)p_4-(\delta n_1-\delta n_2)x\bigr),\\
&\dot p_4={}2J\bigl(\tilde n_2p_2+\tilde n_1p_3-(2+\tilde n_1+\tilde n_2)p_4-(\delta n_1+\delta n_2)x\bigr),\\
&\dot x={}-2AJx-\Omega y+J\delta n_1(p_1-p_2+p_3-p_4)+J\delta n_2(p_1+p_2-p_3-p_4),\\
&\dot y={}\Omega x-2AJy.
\end{align}
\end{subequations}
Here the parameters $A$, $\tilde n_j$, and $\delta n_j$ are defined in
Eqs.~\eqref{eq:A_definition}, \eqref{eq:ntilde1}--\eqref{eq:ntilde2}, and
\eqref{eq:delta_n_main}, respectively. Setting the six derivatives to zero and
imposing $\sum_{\alpha=1}^4p_\alpha=1$ gives a linear system. Eliminating
$x$ and $y$ first yields Eq.~\eqref{eq:rho23_complex}. Substitution into the
four population equations gives the stationary populations below. This
elimination is the same as that used in Refs.~\cite{Wang2019,Zhang2020} and
introduces no additional approximation.

For brevity, we use $N^{\rm sec}=(1+2\tilde n_1)(1+2\tilde n_2)$ from
Eq.~\eqref{eq:N_sec_definition}. The exact stationary populations are
\begin{subequations}
\label{eq:app_populations}
\begin{align}
&p_1={}\frac{(1+\tilde n_1)(1+\tilde n_2)}{N^{\rm sec}}+\frac{KS(2A+1)}{4A N N^{\rm sec}}\bigl(\delta n_1(1+2\tilde n_1)+\delta n_2(1+2\tilde n_2)\bigr),\\
&p_2={}\frac{\tilde n_1(1+\tilde n_2)}{N^{\rm sec}}+\frac{KS}{4A N N^{\rm sec}}\bigl(\delta n_1(1+2\tilde n_1)(2A-1)-\delta n_2(1+2\tilde n_2)(2A+1)\bigr),\\
&p_3={}\frac{(1+\tilde n_1)\tilde n_2}{N^{\rm sec}}-\frac{KS}{4A N N^{\rm sec}}\bigl(\delta n_1(1+2\tilde n_1)(2A+1)-\delta n_2(1+2\tilde n_2)(2A-1)\bigr),\\
&p_4={}\frac{\tilde n_1\tilde n_2}{N^{\rm sec}}-\frac{KS(2A-1)}{4A N N^{\rm sec}}\bigl(\delta n_1(1+2\tilde n_1)+\delta n_2(1+2\tilde n_2)\bigr),
\end{align}
\end{subequations}
\end{widetext}
where the parameters $N$, $A$, $S$, and $K$ are defined in
Eqs.~\eqref{eq:N_definition}, \eqref{eq:A_definition}, \eqref{eq:Sb}, and
\eqref{eq:K_definition}, respectively.

Each exact population is written as a secular baseline, normalized by
$N^{\rm sec}$, plus a nonsecular correction. Together with the coherence in
Eq.~\eqref{eq:rho23_complex}, the populations in
Eq.~\eqref{eq:app_populations} specify the NESS in Eq.~\eqref{eq:ss_block}.
Direct substitution confirms normalization, while $p_3-p_2$ gives
Eq.~\eqref{eq:Delta_p_analytic}. Setting $\delta n_1=\delta n_2=0$ also
reproduces both the equilibrium and uncoupled limits stated in
Sec.~\ref{subsec:compact_param}. We use the solutions only when $N>0$,
$p_\alpha\ge0$, and $p_2p_3\ge|c|^2$.

Under full secularization, all terms proportional to $\delta n_j$ are removed
from Eq.~\eqref{eq:app_six_equations}. The stationary coherence vanishes and
the analytical state becomes
\begin{equation}
\begin{pmatrix}
p_1^{\rm sec}\\p_2^{\rm sec}\\p_3^{\rm sec}\\p_4^{\rm sec}
\end{pmatrix}
=\frac{1}{N^{\rm sec}}
\begin{pmatrix}
(1+\tilde n_1)(1+\tilde n_2)\\
\tilde n_1(1+\tilde n_2)\\
(1+\tilde n_1)\tilde n_2\\
\tilde n_1\tilde n_2
\end{pmatrix}.
\label{eq:app_secular_populations}
\end{equation}
Equation~\eqref{eq:app_secular_populations} gives
Eq.~\eqref{eq:secular_imbalance}. Since $\tilde n_1,\tilde n_2\ge0$, the first
population in Eq.~\eqref{eq:app_secular_populations} is the largest and the
fourth is the smallest, so only the ordering of the two middle populations can
change. This gives the secular ergotropy in Eq.~\eqref{eq:E_sec_closed} and
completes the analytical connection to Sec.~\ref{sec:secular_comparison}.

\section{Automatic eigenvalue ordering of the physical NESS}
\label{app:automatic_ordering}

We prove analytically that the ordering in Eq.~\eqref{eq:automatic_eigenvalue_ordering}
holds for every physical NESS.

\begin{theorem}
\label{thm:automatic_ordering}
For every physical NESS with $N>0$ and $0<\Omega<2\bar\omega$
whose density matrix $\rho^{\rm ss}$ is positive semidefinite, the four state
eigenvalues satisfy
\begin{equation}
p_1\ge r_+\ge r_-\ge p_4,
\label{eq:app_automatic_ordering}
\end{equation}
where $r_\pm$ are the eigenvalues defined in Eq.~\eqref{eq:rpm}.
\end{theorem}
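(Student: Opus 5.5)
The plan is to reduce each outer inequality in Eq.~\eqref{eq:app_automatic_ordering} to a pair of elementary conditions on the $2\times2$ coherent block, and then verify those conditions using the stationary relations of Eq.~\eqref{eq:app_six_equations} together with positivity. The middle inequality $r_+\ge r_-$ holds trivially, since the square root in Eq.~\eqref{eq:rpm} is nonnegative.

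\textbf{Reduction of the outer inequalities.} Let $M$ be the one-excitation block with diagonal entries $p_2,p_3$ and off-diagonal entry $c$. Then $r_+\le p_1$ holds if and only if $p_1\mathbb 1-M$ is positive semidefinite, i.e.,
\begin{equation*}
p_1\ge\max(p_2,p_3),\qquad (p_1-p_2)(p_1-p_3)\ge|c|^2 .
\end{equation*}
Likewise, $r_-\ge p_4$ holds if and only if $M-p_4\mathbb 1\succeq0$, i.e.,
\begin{equation*}
p_4\le\min(p_2,p_3),\qquad (p_2-p_4)(p_3-p_4)\ge|c|^2 .
\end{equation*}
So it suffices to bound $|c|^2$ by these two products and to establish the four diagonal inequalities.

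\textbf{Eliminating the coherence.} From the stationary equations for $x$ and $y$ one has $y=\Omega x/(2AJ)$. Substituting into $\dot x=0$ gives
\begin{equation*}
x=\tfrac{K}{2}\bigl[\delta n_1(p_1-p_2+p_3-p_4)+\delta n_2(p_1+p_2-p_3-p_4)\bigr],
\qquad |c|^2=\frac{x^2}{KA}.
\end{equation*}
Thus $x$ is a linear form in the populations, and $|c|^2$ is a quadratic form in them. The four population equations, with $x$ inserted, become generalized detailed-balance relations. For instance, $\dot p_1=0$ gives
\begin{equation*}
(\tilde n_1+\tilde n_2)p_1=(1+\tilde n_1)p_2+(1+\tilde n_2)p_3+(\delta n_1+\delta n_2)x .
\end{equation*}
I would take suitable combinations of these equations to express $p_1-p_2$, $p_1-p_3$, $p_2-p_4$ and $p_3-p_4$ each as a positive secular baseline plus a correction linear in $x$. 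The baseline comes from Eq.~\eqref{eq:app_secular_populations}; for example, $p_1^{\rm sec}-p_2^{\rm sec}=(1+\tilde n_2)/N^{\rm sec}$. Alternatively, I would read these differences directly from the closed forms in Eq.~\eqref{eq:app_populations}.

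\textbf{Main obstacle.} The hardest step is proving the quadratic inequalities, e.g.\ $(p_1-p_2)(p_1-p_3)\ge x^2/(KA)$. I would first use $KA\le1$, which makes it enough to show $(p_1-p_2)(p_1-p_3)\ge x^2$. To control $x$, I would look for an occupation-only bound of the form
\begin{equation*}
(\delta n_j)^2\le\tilde n_j(1+\tilde n_j)\quad\text{or}\quad(\delta n_j)^2\le\tilde n_j^2 .
\end{equation*}
These would be obtained from a Cauchy--Schwarz argument applied to the weights $\cos^2(\theta/2),\sin^2(\theta/2)$ in Eqs.~\eqref{eq:ntilde1}--\eqref{eq:delta_n_main}, with the prefactor $\sin\theta/2=\sin(\theta/2)\cos(\theta/2)$ being exactly a geometric mean of those weights. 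If such a bound fails in isolation, I would instead exploit the assumed positivity of $\rho^{\rm ss}$: the condition $p_2p_3\ge|c|^2$ already bounds $x$ by the middle populations. Combined with the stationary relation for $\dot p_1$, which ties $p_1$ to $p_2,p_3$ with weights $(1+\tilde n_a)/(\tilde n_1+\tilde n_2)\ge1$, this should show that $p_1-p_2$ and $p_1-p_3$ exceed $\sqrt{p_2p_3}\ge|c|$ in the relevant sense. The lower inequality involving $p_4$ would follow by the same argument applied to the $\dot p_4=0$ relation, where $p_4$ carries the larger weight $2+\tilde n_1+\tilde n_2$. I expect the sign bookkeeping of the $(\delta n_1\pm\delta n_2)x$ terms to be the delicate part, and I would handle it by treating the cases $x\ge0$ and $x<0$ separately.
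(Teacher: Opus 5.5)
\textbf{There is a genuine gap.} Your reduction of the outer inequalities to $p_1I-M\succeq0$ and $M-p_4I\succeq0$ (with $I$ the $2\times2$ identity) is correct, and so is the elimination $|c|^2=x^2/(KA)$. But the step that carries all the weight, bounding $|c|^2$ by the products of population gaps, is never supplied, and each tool you propose for it fails.

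First, $KA=A^2/[A^2+(\Omega/2J)^2]\le1$ gives $|c|^2=x^2/(KA)\ge x^2$. So proving $(p_1-p_2)(p_1-p_3)\ge x^2$ is \emph{weaker} than what you need; the reduction runs the wrong way.

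Second, the occupation-only bounds are false in general. Take $T_1$ cold enough that $n_1(\omega'_1)$ is negligible, and set $s=\sin^2(\theta/2)$. Then $\tilde n_1\approx s\,n_2(\omega'_1)$ and $(\delta n_1)^2\approx s(1-s)\,n_2(\omega'_1)^2$. Hence $(\delta n_1)^2\le\tilde n_1^2$ fails for every $s<1/2$, and $(\delta n_1)^2\le\tilde n_1(1+\tilde n_1)$ fails once $(1-2s)\,n_2(\omega'_1)>1$.

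Third, the fallback is too lossy. The weight $(1+\tilde n_1)/(\tilde n_1+\tilde n_2)$ is at least $1$ only if $\tilde n_2\le1$. More seriously, comparing the gaps with $\sqrt{p_2p_3}\ge|c|$ requires the intermediate inequalities $(p_1-p_2)(p_1-p_3)\ge p_2p_3$ and $(p_2-p_4)(p_3-p_4)\ge p_2p_3$. At equilibrium, the ratio of left side to right side is $1/(\tilde n_1\tilde n_2)$ in the first and $1/[(1+\tilde n_1)(1+\tilde n_2)]<1$ in the second. So for a small nonzero bias, where $c\ne0$, the inequality you need for the lower bound fails at every temperature. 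The one for the upper bound fails whenever $\tilde n_1\tilde n_2>1$. What is required is an exact relation between $|c|^2$ and the population gaps, not a generic positivity estimate.

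That relation is the missing idea, and the paper's proof supplies it. Your alternative of reading the differences off Eq.~\eqref{eq:app_populations} is the right starting point. What one finds there is that the exact NESS collapses onto a one-parameter family. With $D=A^2+(\Omega/2J)^2$ and $X=[D+(\delta n_1-\delta n_2)^2]/(DN)>0$, one has
$p_1=[1+(2A+1)X]/4$, $p_4=[1-(2A-1)X]/4$, and $p_2+p_3=(1-X)/2$.
Two identities then give $R^2=(\Delta p)^2+4|c|^2=A^2X^2-X$: first $SQ-4A^2\delta n_1\delta n_2=N^{\rm sec}(\delta n_1-\delta n_2)^2$, where $Q=\delta n_1(1+2\tilde n_1)+\delta n_2(1+2\tilde n_2)$, and second Eq.~\eqref{eq:app_ordering_square_identity}.

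Two consequences follow:
\begin{itemize}
\item $p_1-r_+=[(A+1)X-R]/2>0$ holds automatically.
\item Because $A\ge1$, the condition $r_--p_4=[(A-1)X-R]/2\ge0$ is equivalent to $X[1-(2A-1)X]\ge0$, i.e., to $p_4\ge0$.
\end{itemize}
Thus the whole theorem follows from the single physicality condition $p_4\ge0$, with no case split on the sign of $x$.
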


\begin{proof}
For the analytical stationary state, define
\begin{equation}
D=A^2+\left(\Omega/2J\right)^2,
\label{eq:app_ordering_auxiliary}
\end{equation}
with $A=1+\tilde n_1+\tilde n_2$ given by Eq. \eqref{eq:A_definition}. In addition to the coherence source
\begin{equation}
S=\delta n_1(1+2\tilde n_2)+\delta n_2(1+2\tilde n_1),
\label{eq:app_ordering_source}
\end{equation}
given in Eq. \eqref{eq:Sb},
it is convenient to introduce
\begin{equation}
Q=\delta n_1(1+2\tilde n_1)+\delta n_2(1+2\tilde n_2).
\label{eq:app_ordering_Q}
\end{equation}
The normalization factor and the coefficient $K$ can then be written as
\begin{equation}
N=N^{\rm sec}-\frac{4A^2\delta n_1\delta n_2}{D},
\qquad
K=\frac{A}{D},
\label{eq:app_ordering_NK}
\end{equation}
where $N^{\rm sec}=(1+2\tilde n_1)(1+2\tilde n_2)$ is given by
Eq.~\eqref{eq:N_sec_definition}.

The parameters $S$ and $Q$ obey the exact identity
\begin{equation}
SQ-4A^2\delta n_1\delta n_2
=N^{\rm sec}(\delta n_1-\delta n_2)^2.
\label{eq:app_ordering_SQ_identity}
\end{equation}
It follows that
\begin{equation}
DN+SQ=N^{\rm sec}\left(D+(\delta n_1-\delta n_2)^2\right).
\label{eq:app_ordering_DN_identity}
\end{equation}

We therefore define the positive parameter
\begin{equation}
X=\frac{D+(\delta n_1-\delta n_2)^2}{DN}>0,
\label{eq:app_ordering_X}
\end{equation}
where $D>0$ and $N>0$ for the physical stationary solution considered here.
Using the exact populations in Appendix~\ref{app:br_steady_state}, the sum and
difference of the two outer populations reduce to
\begin{equation}
p_1+p_4=\frac{1+X}{2},
\qquad
p_1-p_4=AX.
\label{eq:app_ordering_outer_sum_difference}
\end{equation}
Equivalently,
\begin{equation}
p_1=\frac{1+(2A+1)X}{4},
\qquad
p_4=\frac{1-(2A-1)X}{4},
\label{eq:app_ordering_outer_populations}
\end{equation}
and normalization gives
\begin{equation}
p_2+p_3=\frac{1-X}{2}.
\label{eq:app_ordering_middle_sum}
\end{equation}

Next, the eigenvalues of the single-excitation subspace are given by
\begin{equation}
r_\pm=\frac{p_2+p_3\pm R}{2},
\label{eq:app_ordering_block_eigenvalues}
\end{equation}
with
\begin{equation}
R=\sqrt{(p_3-p_2)^2+4|c|^2}.
\label{eq:app_ordering_R}
\end{equation}
The population imbalance is
\begin{equation}
p_3-p_2=
\frac{D(\tilde n_2-\tilde n_1)
+A\left((\delta n_2)^2-(\delta n_1)^2\right)}{DN},
\label{eq:app_ordering_imbalance}
\end{equation}
and
\begin{equation}
4|c|^2=\frac{S^2}{DN^2}.
\label{eq:app_ordering_coherence}
\end{equation}
The following algebraic identity is exact:
\begin{multline}
(D+(\delta n_1-\delta n_2)^2)
(A^2(D+(\delta n_1-\delta n_2)^2)-DN)\\
=(D(\tilde n_2-\tilde n_1)
+A((\delta n_2)^2-(\delta n_1)^2))^2+DS^2.
\label{eq:app_ordering_square_identity}
\end{multline}
Combining these relations with the definition of $X$ gives
\begin{equation}
R^2=A^2X^2-X.
\label{eq:app_ordering_R_identity}
\end{equation}

The remaining step follows directly from physicality. Since
$\rho^{\rm ss}$ is positive semidefinite, $p_4\ge0$. From
Eq.~\eqref{eq:app_ordering_outer_populations},
\begin{equation}
(2A-1)X\le1.
\label{eq:app_ordering_physicality_bound}
\end{equation}
Because $\tilde n_1,\tilde n_2\ge0$, one has
$A=1+\tilde n_1+\tilde n_2\ge1$. Hence
\begin{align}
R^2-(A-1)^2X^2
&=X\left((2A-1)X-1\right)\le0,
\label{eq:app_ordering_R_comparison}
\end{align}
and therefore
\begin{equation}
R\le(A-1)X.
\label{eq:app_ordering_R_bound}
\end{equation}
Using Eq.~\eqref{eq:app_ordering_middle_sum}, one finally obtains
\begin{subequations}
\begin{align}
&r_- -p_4=\frac{(A-1)X-R}{2}\ge0,\\
&p_1-r_+=\frac{(A+1)X-R}{2}\ge X>0.
\label{eq:app_ordering_margins}
\end{align}
\end{subequations}
Together with $r_+\ge r_-$ by definition, this proves
\begin{equation}
p_1>r_+\ge r_-\ge p_4
\label{eq:app_ordering_strict_result}
\end{equation}
for every physical NESS described by the analytical solution.
\end{proof}

\section{Temperature-bias and resonant weak-exchange laws}
\label{app:asymptotics}

This appendix derives Eqs.~\eqref{eq:temperature_bias_law} and
\eqref{eq:resonant_weak_exchange} in Sec. \ref{sec:resonance}. Both expansions use a physical, normalized
stationary state with $N>0$. The required eigenvalue ordering follows from
Appendix~\ref{app:automatic_ordering}. We use
$n_j(\omega)=n(\omega,T_j)$ throughout, with the subscript $0$ denoting
the zeroth-order value in the corresponding expansion.

\subsection{Temperature-bias response near equilibrium}
\label{app:near_equilibrium}

We derive Eq.~\eqref{eq:temperature_bias_law} at a fixed, nondegenerate
Hamiltonian. Fix $0<\Omega<2\bar\omega$, $J>0$, and an equilibrium
temperature $T>0$, with $T_1=T$ held fixed and $T_2=T+\Delta T$.
The expansion parameter is $\Delta T=T_2-T_1$, with
$|\Delta T|/T\ll1$.
At zero bias $\Delta T=0$, we have the Bose occupation numbers $n_1(\omega) = n_2(\omega) = n(\omega,T)$, which gives
\begin{subequations}
\begin{align}
&A_0=1+n_1(\omega'_1)+n_1(\omega'_2), \\
&N_0=(1+2n_1(\omega'_1))(1+2n_1(\omega'_2)). 
\end{align}
\end{subequations}
Furthermore, we have the expansions
\begin{subequations}
\begin{align}
&\delta n_j=-\frac{\sin\theta}{2}\,g_j\Delta T+O((\Delta T)^2),\\
&S=-\frac{\sin\theta}{2}\,F_T\Delta T+O((\Delta T)^2),\\
\end{align}
\end{subequations}
where
\begin{subequations}
\begin{align}
&g_j=\partial_T n(\omega'_j,T) = \frac{\omega'_j}{T^2}n_1(\omega'_j)(1+n_1(\omega'_j)),\\
&F_T=g_1(1+2n_1(\omega'_2))+g_2(1+2n_1(\omega'_1)). 
\end{align}
\end{subequations}
Then the NESS coherence becomes
\begin{align}
|c|^2=\frac{\sin^2\theta\,F_T^2}
{4N_0^2(4A_0^2+(\Omega/J)^2)}
(\Delta T)^2+O((\Delta T)^3).
\label{eq:app_bias_source_expansion}
\end{align}
Here $\Delta p=(n_1(\omega'_2)-n_1(\omega'_1))/N_0+O(\Delta T)<0$ and
$N=N_0+O(\Delta T)$. Since the finite-temperature Gibbs state has
strictly ordered positive eigenvalues, a sufficiently small bias remains
within the physical stationary-state neighborhood. Expanding the compact expression
gives $\mathcal E_{\rm coh}=\Omega|c|^2/|\Delta p|+O(|c|^4)$ and hence
\begin{equation}
\mathcal E_{\rm coh}=C_T(\Delta T)^2+O((\Delta T)^3),
\label{eq:app_bias_quadratic_law}
\end{equation}
where the coefficient is given by
\begin{equation}
C_T=\frac{\Omega\sin^2\theta\,F_T^2}
{4N_0(n_1(\omega'_1)-n_1(\omega'_2))(4A_0^2+(\Omega/J)^2)}.
\end{equation}
This proves Eq.~\eqref{eq:temperature_bias_law}.  

\subsection{Weak exchange-coupling law}
\label{app:resonant_weak_exchange}

To derive Eq.~\eqref{eq:resonant_weak_exchange}, we set $\Delta\omega=0$
before taking $\lambda\to0^+$, keeping
$\bar\omega,J>0$ and unequal bath temperatures fixed. The two
transition frequencies coalesce at $\bar\omega$ in this
limit, while $\theta=\pi/2$ on the nondegenerate side. Consequently, the parameters $N$ in Eq. \eqref{eq:N_definition}, $A$ in Eq.~\eqref{eq:A_definition}, $S$ in Eq. \eqref{eq:Sb}, and $K$ in Eq. \eqref{eq:K_definition} have the expansions $N=N_0+O(\lambda^2)$, $A=A_0+O(\lambda^2)$, $S=S_0+O(\lambda^2)$, and $K = K_0 +O(\lambda^2)$ with
\begin{subequations}
\begin{align}
&N_0 =A_0^2-(n_1(\bar\omega)-n_2(\bar\omega))^2, \\
&A_0 =1+n_1(\bar\omega)+n_2(\bar\omega), \\
&S_0 = (n_1(\bar\omega)-n_2(\bar\omega))A_0, \\
&K_0 = \frac{1}{A_0}.
\end{align}
\end{subequations}
The linear corrections cancel because the two
transition frequencies shift symmetrically about $\bar\omega$.
Therefore, the NESS coherence magnitude becomes
\begin{equation}
|c| = \frac{|n_1(\bar\omega)-n_2(\bar\omega)|}{2N_0} + O(\lambda^2).
\end{equation}
Substituting it and $\Delta p=O(\lambda)$ into
Eq.~\eqref{eq:Ecoh_raw} gives
\begin{equation}
\mathcal E_{\rm coh}
=\frac{|n_1(\bar\omega)-n_2(\bar\omega)|}{2N_0}\lambda+O(\lambda^2).
\end{equation}
Using $N_0=(1+2n_1(\bar\omega))(1+2n_2(\bar\omega))$ yields
\begin{equation}
\mathcal E_{\rm coh}
=\frac{|n_1(\bar\omega)-n_2(\bar\omega)|}{2(1+2n_1(\bar\omega))(1+2n_2(\bar\omega))}\lambda
+O(\lambda^2).
\label{eq:app_resonant_linear_law}
\end{equation}
This is Eq.~\eqref{eq:resonant_weak_exchange}.
Thus $|c|\to|n_1(\bar\omega)-n_2(\bar\omega)|/(2N_0)$ is finite, whereas the energy
splitting available to extract work tends to zero. At equilibrium the
coefficient vanishes and the Gibbs state has zero ergotropy for every
exchange coupling. 

\section{Weak-exchange coefficients at finite detuning}
\label{app:finite_detuning_analysis}

In this appendix, we derive the coefficients $C_{\Delta\tilde p}$,
$C_{\Delta \tilde p\neq 0}$, and $C_{\Delta \tilde p= 0}$ in
Eqs.~\eqref{eq:weak_exchange_population_order},
\eqref{eq:detuned_generic_condition}, and
\eqref{eq:detuned_tuned_coherent_law}, respectively. Hold
$\Delta\omega>0$, $T_2>T_1>0$, and $J>0$ fixed, and take
$\lambda/\Delta\omega\to0^+$. We use $n_j(\omega)=n(\omega,T_j)$. In the uncoupled limit, the parameters have the expansions $N=N_0+O(\lambda^2)$, $A=A_0+O(\lambda^2)$, and $K = K_0 +O(\lambda^2)$ with
\begin{subequations}
\label{eq:app_detuned_coeff_definitions}
\begin{align}
&N_0=(1+2n_1(\omega_1))(1+2n_2(\omega_2)),\\
&A_0=1+n_1(\omega_1)+n_2(\omega_2),\\
&K_0=\frac{A_0}{A_0^2+(\Delta\omega/2J)^2}.
\end{align}
\end{subequations}
At zero exchange, we have
$\Delta\tilde p(\lambda=0)=n_2(\omega_2)-n_1(\omega_1)$,
$\Delta p(\lambda=0)=\Delta\tilde p(\lambda=0)/N_0$,
and $c(\lambda=0)=0$. The frequency derivatives satisfy
\begin{equation}
-\partial_\omega n_j(\omega)=\frac{n_j(\omega)(1+n_j(\omega))}{T_j}>0.
\end{equation}
The argument in Corollary~\ref{cor:activation_direction} shows that
$n_2(\omega)-n_1(\omega)$ is positive and decreases strictly with frequency.

Expanding the splitting and mixing weights gives
\begin{subequations}
\label{eq:app_population_frequency_mixing_expansion}
\begin{align}
&\Omega=\Delta\omega+\frac{\lambda^2}{2\Delta\omega}+O(\lambda^4),\\
&\omega'_{1,2}=\omega_{1,2}\mp\frac{\lambda^2}{4\Delta\omega}+O(\lambda^4),\\
&\cos^2(\theta/2)=1-\frac{\lambda^2}{4(\Delta\omega)^2}+O(\lambda^4),\\
&\sin^2(\theta/2)=\frac{\lambda^2}{4(\Delta\omega)^2}+O(\lambda^4),
\end{align}
\end{subequations}
Together with $\sin\theta=\lambda/\Delta\omega+O(\lambda^3)$,
substitution into Eqs.~\eqref{eq:ntilde1}--\eqref{eq:delta_n_main} yields
\begin{subequations}
\label{eq:app_population_occupation_expansion}
\begin{align}
&\begin{aligned}[t]
\tilde n_1={}&n_1(\omega_1)
-\frac{\partial_\omega n_1(\omega_1)}{4\Delta\omega}\lambda^2\\
&+\frac{n_2(\omega_1)-n_1(\omega_1)}{4(\Delta\omega)^2}\lambda^2
+O(\lambda^4),
\end{aligned}\\
&\begin{aligned}[t]
\tilde n_2={}&n_2(\omega_2)
+\frac{\partial_\omega n_2(\omega_2)}{4\Delta\omega}\lambda^2\\
&-\frac{n_2(\omega_2)-n_1(\omega_2)}{4(\Delta\omega)^2}\lambda^2
+O(\lambda^4),
\end{aligned}\\
&\delta n_1=-\frac{n_2(\omega_1)-n_1(\omega_1)}{2\Delta\omega}\lambda+O(\lambda^3),\\
&\delta n_2=-\frac{n_2(\omega_2)-n_1(\omega_2)}{2\Delta\omega}\lambda+O(\lambda^3).
\end{align}
\end{subequations}
Inserting the above expansions into the numerator of
Eq.~\eqref{eq:Delta_p_analytic},
$\Delta\tilde p=\tilde n_2-\tilde n_1+K((\delta n_2)^2-(\delta n_1)^2)$,
gives
\begin{equation}
\Delta\tilde p=\Delta\tilde p(\lambda=0)-C_{\Delta\tilde p}\lambda^2+O(\lambda^4),
\label{eq:app_weak_population_order}
\end{equation}
with
\begin{multline}
C_{\Delta\tilde p}=-\frac{\partial_\omega n_1(\omega_1)+\partial_\omega n_2(\omega_2)}{4\Delta\omega}\\
+\frac{n_2(\omega_1)-n_1(\omega_1)+n_2(\omega_2)-n_1(\omega_2)}{4(\Delta\omega)^2}\\
+\frac{K_0}{4(\Delta\omega)^2}\bigl((n_2(\omega_1)-n_1(\omega_1))^2\\
-(n_2(\omega_2)-n_1(\omega_2))^2\bigr)>0.
\label{eq:app_Cpop}
\end{multline}
Note that the last term arises from the nonsecular population correction. This establishes
Eq.~\eqref{eq:weak_exchange_population_order}.

To obtain the coherent-ergotropy coefficients, define
\begin{multline}
\mathcal F=(n_2(\omega_1)-n_1(\omega_1))(1+2n_2(\omega_2))\\
+(n_2(\omega_2)-n_1(\omega_2))(1+2n_1(\omega_1))>0.
\label{eq:app_detuned_coherence_source}
\end{multline}
Equations~\eqref{eq:Sb} and~\eqref{eq:rho23_complex} then give
\begin{subequations}
\label{eq:app_detuned_source_expansion}
\begin{align}
&S=-\frac{\mathcal F}{2\Delta\omega}\lambda+O(\lambda^3),\\
&|c|=\frac{\mathcal F}
{2\Delta\omega N_0\sqrt{4A_0^2+(\Delta\omega/J)^2}}\lambda+O(\lambda^3).
\end{align}
\end{subequations}
For $\Delta\tilde p(\lambda=0)\ne0$,
$|\Delta p|=|\Delta\tilde p(\lambda=0)|/N_0+O(\lambda^2)$,
so expanding the
square root in Eq.~\eqref{eq:Ecoh_raw} gives
\begin{equation}
\mathcal E_{\rm coh}
=\frac{\Omega|c|^2}{|\Delta p|}+O(\lambda^4).
\label{eq:app_detuned_quadratic_law}
\end{equation}
Substituting Eq.~\eqref{eq:app_detuned_source_expansion} gives the coefficient
in Eq.~\eqref{eq:detuned_generic_condition}:
\begin{equation}
C_{\Delta \tilde p\neq 0}
=\frac{\mathcal F^2}
{4\Delta\omega N_0|\Delta\tilde p(\lambda=0)|
(4A_0^2+(\Delta\omega/J)^2)}.
\label{eq:app_Ccoh_generic}
\end{equation}

At the tuned balance $\Delta\tilde p(\lambda=0)=0$, Eq.~\eqref{eq:app_weak_population_order}
instead gives
$\Delta p=-C_{\Delta\tilde p}\lambda^2/N_0+O(\lambda^4)$.
Equation~\eqref{eq:app_detuned_source_expansion} then gives
$\sqrt{(\Delta p)^2+4|c|^2}=2|c|+O(\lambda^3)$.
Equation~\eqref{eq:Ecoh_raw} therefore becomes
\begin{equation}
\mathcal E_{\rm coh}
=\Delta\omega|c|
-\frac{\Delta\omega C_{\Delta\tilde p}}{2N_0}\lambda^2+O(\lambda^3),
\label{eq:app_detuned_tuned_linear_law}
\end{equation}
which yields the coefficient in Eq.~\eqref{eq:detuned_tuned_coherent_law}:
\begin{equation}
C_{\Delta \tilde p= 0}=\frac{\mathcal F}{2N_0\sqrt{4A_0^2+(\Delta\omega/J)^2}}>0,
\label{eq:app_Ccoh_tuned}
\end{equation}
with all parameters evaluated at $\Delta\tilde p(\lambda=0)=0$. These expansions assume fixed
nonzero detuning $\Delta\omega$.


%

\end{document}